\newtheorem{theorem}{Theorem}
\newtheorem{lemma}{Lemma}
\newtheorem{cor}{Corollary}
\begin{document}

\title{On the Rate of Convergence of the Power-of-Two-Choices to its Mean-Field Limit}
\author{
Lei Ying, ECEE, Arizona State University
}

\maketitle
\begin{abstract}
This paper studies the rate of convergence of the power-of-two-choices, a celebrated randomized load balancing algorithm for many-server queueing systems, to its mean field limit. The convergence to the mean-field limit has been proved in the literature, but the rate of convergence remained to be an open problem. This paper establishes that the sequence of stationary distributions, indexed by $M,$ the number of servers in the system, converges in mean-square to its mean-field limit with rate $O\left(\frac{(\log M)^3 (\log\log M)^2}{M}\right).$
\end{abstract}

\section{Introduction}
The power-of-two-choices is a celebrated randomized load balancing algorithm for the {\em supermarket model}, under which each incoming job is routed to the shorter of two randomly sampled servers from $M$ servers. It has been shown in  \cite{Mit_96,VveDobKar_96} that in the infinite server regime (also called the mean-field limit), the power-of-two-choices reduces the mean queue length (at each server) from $\Theta\left(\frac{1}{1-\rho}\right)$ to $\Theta\left(\log \frac{1}{1-\rho}\right),$ where $\rho$ is the load of the system. Besides the convergence of the stationary distributions to its mean-field limit, and the process-level convergence over a finite time interval has been studied in \cite{LucNor_05}, the convergence analysis with general service-time distributions can be found in \cite{BraLuPra_12}, and the analysis with heterogeneous servers can be found in \cite{MukMaz_13}. This seminal work has also inspired new randomized load balancing algorithms in recent years (see, e.g., \cite{TsiXu_12,LuXieKli_11,YinSriKan_15,XieDonLu_15,MukKarMaz_15}).

The convergence proofs in \cite{Mit_96,VveDobKar_96} are based on the interchange of the limits, which include the following steps: (1) proving the convergence of the continuous-time Markov chain (CTMC) to the solution of a dynamical system (the mean-field model) over a finite time interval $[0,t]$ as $M\rightarrow\infty,$ i.e., \begin{equation*}\lim_{M\rightarrow \infty}\sup_{0\leq s\leq t}d({\bf x}^{(M)}(s),{\bf x}(s))=0,\end{equation*} where ${\bf x}^{(M)}$ is the CTMC, $\bf x$ is the solution of the dynamical system, and $d(\cdot,\cdot)$ is some measure of distance; (2) proving that the mean-field model converges to a unique equilibrium point starting from any initial condition, i.e., $$\lim_{t\rightarrow\infty} {\bf x}(t)={\bf x}^*,$$ where $\bf x^*$ is the equilibrium point;  and (3) establishing the convergence of the stationary distributions to its mean-field limit via the interchange of the limits, i.e., \begin{align*}
\lim_{M\rightarrow\infty}\lim_{t\rightarrow\infty} {\bf x}^{(M)}(t)=\lim_{t\rightarrow\infty} \lim_{M\rightarrow\infty}  {\bf x}^{(M)}(t) ={\bf x}^*.
\end{align*} This approach based on the interchange of the limits and process-level convergence is an {\em indirect} method for proving the convergence of stationary distributions, so does not answer the rate of convergence (i.e., the approximation error of using the mean-field limit for the finite-size system). Over last few years, a new approach has emerged for quantifying the rate of convergence of queueing systems to its mean-field or diffusion limit based on Stein's method. Gurvich \cite{Gur_14} used the method for steady-state approximations for exponentially ergodic Markovian queues. In \cite{BraDai_15} a modular framework has been developed by Braverman and Dai for steady-state diffusion approximations and has been used for quantifying the rate of convergence to diffusion models for $M/Ph/n+M$ queuing systems. An approach similar to Stein's method has also been used by Stolyar in \cite{Sto_15_2} to show the tightness of diffusion-scaled stationary distributions. An introduction to Stein's method for steady-state diffusion approximations can be found in \cite{BraDaiFen_15}. Based on Stein's method and the perturbation theory,  Ying recently established the rate of convergence of a class of stochastic systems that can be represented by {\em finite-dimensional} population processes \cite{Yin_16}.

This paper exploits the method in \cite{Yin_16}. However, the result in \cite{Yin_16} only applies to continuous-time Markov chains (CTMCs) with {\em finite} state space so that the CTMC can be represented as a finite-dimensional population process. The main contributions of this paper are two-fold. First, it establishes the rate of convergence of the power-of-two-choices under the super-market model, where the state space of the CTMC is countable instead of finite, and the corresponding population process is an infinite dimensional process. From the best of our knowledge, the rate of convergence of the power-of-two-choices to its mean-field limit has not been established. In fact, little is known for the rate of convergence of infinite-dimensional queueing systems (e.g., \cite{TsiXu_12,YinSriKan_15}) to their mean-field limits. Another contribution of this paper is that  it demonstrates that the approach based on Stein's method and the perturbation theory \cite{Yin_16} can be extended to infinite-dimensional systems. While general conditions such as those in \cite{Yin_16} appear difficult to establish, the approach itself can be applied for analyzing other infinite-dimensional systems. In particular, this paper considers a so-called {\em imperfect} mean-field model \cite{Yin_16}, which is a finite-dimensional, truncated version of the dynamical system corresponding to the infinite-dimensional CTMCs (the perfect mean-field model). Specifically, this paper constructs a truncated dynamical system with dimensional $n=\Theta(\log M)$. The equilibrium point of the truncated system is consistent with the first $n$-dimension of the mean-field limit. Then we prove the approximation error from the mean-field model is $O\left(\frac{n^3\left(\log n\right)^2}{M}\right)$ and the error from the truncation is $O\left(\rho^n n\log n\right)$. After establishing the two facts above, the rate of convergence in mean square is obtained by choosing $n=\Theta(\log M).$

\section{The power-of-two-choices and its mean-field limit}
This paper studies the super market model in \cite{Mit_96,VveDobKar_96} with $M$ identical servers, each maintaining a separate queue. Assume jobs arrive at the system following a Poisson process with rate $\lambda M$ and the processing time of each job is exponentially distributed with mean processing time $\mu=1.$ Let $Q_m(t)$ denote the queue size of server $m$ at time $t.$ For each incoming job, the router (or called a scheduler) routes the job using the power-of-two-choices algorithm, which randomly samples two servers and dispatches the job to the server with a smaller queue size. In this setting, ${\bf Q}(t)$ is a CTMC and has a unique stationary distribution when $\lambda <1$ \cite{VveDobKar_96}.

Let $s^{(M)}_k(t)$ denote the fraction of servers with queue size {\em at least} $k$ at time instant $t,$ where the superscript $M$ denotes the system size.  ${\bf s}^{(M)}(t)$ is also a CTMC with the following transition rates:
\begin{equation}
Q_{{\bf s}, {\bf s}'}=\left\{
                        \begin{array}{ll}
                          M(s_k-s_{k+1}), &  \hbox{ if } {\bf s}'={\bf s}-\frac{{\bf 1}_k}{M} \\
                          \lambda M\left(s_{k-1}^2-s_k^2\right), & \hbox{ if } {\bf s}'={\bf s}+\frac{{\bf 1}_k}{M}\\
                          \sum_{k=1}^{\infty}-\lambda M\left(s_{k-1}^2-s_k^2\right)-M(s_k-s_{k+1}), & \hbox{ if } {\bf s}'={\bf s}\\
                          0, & \hbox{otherwise.}
                        \end{array}
                      \right., \label{eq:tranrate}
\end{equation} where ${\bf 1}_k$ is a $n\times 1$ vector such that the $k$th element is $1$ and the others are $0.$ Note that the first term is for the event that a departure occurs at a queue with size $k$ so $s_k$ decreases by $1/M.$  The second term is for the event that an arrival occurs and it is routed to a queue with size $k-1.$ It has been shown in \cite{VveDobKar_96} that $s^{(M)}_k(\infty)$ converges weakly to $s^*_k,$ where \begin{equation}s^*_k=\lambda^{2^k-1}\label{eq:sd}\end{equation} is the equilibrium point of the following mean-field model:
\begin{eqnarray*}
\dot{s}_k&=&\lambda(s_{k-1}^2-s_k^2)-(s_k-s_{k+1})\quad \forall\  k\geq 1\\
s_0&=&1.
\end{eqnarray*}

\section{Stein's method for the rate of convergence}

Note that the system above is an infinite dimensional system. Analysis of perturbed infinite-dimensional nonlinear systems is a challenging problem. So instead of analyzing the infinite dimensional system, we consider an $n$-dimensional truncated system defined as follows:
\begin{equation}
\dot{\tilde{s}}_k=\left\{
               \begin{array}{ll}
                 \lambda(\tilde{s}_{k-1}^2-\tilde{s}_k^2)-(\tilde{s}_k-\tilde{s}_{k+1}), & n-1\geq k\geq 1; \\
                \lambda(\tilde{s}_{n-1}^2-\tilde{s}_n^2)-(\tilde{s}_n-s^*_{n+1}), &k=n.
               \end{array}
             \right., \label{eq: po2}
\end{equation} where $\tilde{s}_0(t)=1$ for $t\geq 0.$
It is easy to verify that ${\bf s}^*$ defined in (\ref{eq:sd}) remains to be the unique equilibirium point of this truncated system. Furthermore, $\tilde{s}_k(t)\leq 1$ for any $t$ given the initial condition of the system satisfies $1=\tilde{s}_0(0)\geq \tilde{s}_1(0)\geq \tilde{s}_2(0)\geq \cdots\geq \tilde{s}_n(0)$ as shown in Lemma \ref{lem:bound}.
\begin{lemma}
Consider the dynamical system defined in (\ref{eq: po2}). Given that the initial condition satisfies $1=\tilde{s}_0(0)\geq \tilde{s}_1(0)\geq \tilde{s}_2(0)\geq \cdots\geq \tilde{s}_n(0),$ we have $\tilde{s}_k(t)\leq 1$ for any $t\geq 0$ and $0\leq k\leq n.$ \label{lem:bound}
\end{lemma}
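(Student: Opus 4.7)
My plan is to treat $\tilde{s}_k(t)\le 1$ as a forward-invariance statement for the hypercube $[0,1]^n$ under the truncated flow (\ref{eq: po2}), and to establish it by a first-exit-time/comparison argument. The hypothesis $1=\tilde{s}_0(0)\ge\tilde{s}_1(0)\ge\cdots\ge\tilde{s}_n(0)$ places the initial state in $[0,1]^n$, so it suffices to show the trajectory cannot leave through the upper face $\bigcup_k\{\tilde{s}_k=1\}$.

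The key sign computation I would carry out sits at a putative first exit time. Let $\tau:=\inf\{t\ge 0:\max_k\tilde{s}_k(t)>1\}$ and suppose toward a contradiction that $\tau<\infty$. By continuity, $\tilde{s}_j(\tau)\le 1$ for every $j$ with equality attained, and taking $k^\star$ to be the \emph{largest} index with $\tilde{s}_{k^\star}(\tau)=1$, a direct substitution into (\ref{eq: po2}) yields
\begin{equation*}
\dot{\tilde{s}}_{k^\star}(\tau) \;=\; \lambda\bigl(\tilde{s}_{k^\star-1}(\tau)^2-1\bigr)\;-\;(1-c),
\end{equation*}
where $c=\tilde{s}_{k^\star+1}(\tau)<1$ when $k^\star<n$ (by maximality of $k^\star$), and $c=s^\star_{n+1}=\lambda^{2^{n+1}-1}<1$ when $k^\star=n$. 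In both cases the first summand is nonpositive (using $\tilde{s}_{k^\star-1}(\tau)\le 1$) and the second is strictly negative, so $\dot{\tilde{s}}_{k^\star}(\tau)<0$.

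The main obstacle is that the first-order sign at $k^\star$ alone is not sufficient when several consecutive indices $k_1\le k\le k_2$ all attain the value $1$ at $\tau$: for interior indices of such a block one gets $\dot{\tilde{s}}_k(\tau)=\lambda(1-1)-(1-1)=0$, so the strict negativity at the endpoints must be propagated inward. To close this gap I would invoke the Kamke--M\"uller comparison theorem for cooperative (quasi-monotone) systems. The right-hand side $F$ of (\ref{eq: po2}) is quasi-monotone on $\{\tilde{s}\ge 0\}$ since $\partial F_k/\partial\tilde{s}_{k-1}=2\lambda\tilde{s}_{k-1}\ge 0$ and $\partial F_k/\partial\tilde{s}_{k+1}=1\ge 0$, while all other cross-partials vanish. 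The constant curve $y(t)\equiv(1,\ldots,1)$ is a super-solution because $F(y)=(0,\ldots,0,-(1-s^\star_{n+1}))\le 0=\dot y$, and $\tilde{s}(0)\le y(0)$ by hypothesis; the comparison principle then yields $\tilde{s}(t)\le \mathbf{1}$ for all $t\ge 0$. A parallel check with the constant sub-solution $z\equiv\mathbf{0}$, for which $F(z)=(0,\ldots,0,s^\star_{n+1})\ge 0=\dot z$, verifies $\tilde{s}(t)\ge 0$ so the trajectory stays inside the quasi-monotone region throughout, legitimizing the application of the theorem.
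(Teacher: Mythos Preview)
Your argument is correct, and the boundary derivative computation is the same one the paper uses. The paper's proof stays at that level: it observes that whenever $\tilde{s}_j(t)=1=\max_{k}\tilde{s}_k(t)$ with the trajectory so far confined to $\max_{k}\tilde{s}_k\le 1$, one has $\dot{\tilde{s}}_j(t)\le 0$, and concludes directly. The ``obstacle'' you identify is partly self-imposed: by insisting on \emph{strict} negativity at the single index $k^\star$ you are forced to worry about the zero-derivative indices in a block, whereas the paper's nonpositivity at \emph{every} maximizer is precisely the Nagumo tangency condition for forward invariance of the closed set $\{s:s_k\le 1\ \forall k\}$, so no strict sign is needed and the block issue does not arise.

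Your resolution via the Kamke--M\"uller comparison theorem for cooperative systems is a valid and more explicit alternative that exploits the monotone structure of (\ref{eq: po2}) and yields the lower bound $\tilde{s}_k(t)\ge 0$ as a by-product. Two small slips: at $z\equiv\mathbf{0}$ the first component is $F_1(\mathbf{0})=\lambda\,\tilde{s}_0^2=\lambda$, not $0$ (harmless, since $F(\mathbf{0})\ge 0$ still holds); and the circularity you flag---quasi-monotonicity requires $\tilde{s}\ge 0$, which you then establish by the same comparison---is dissolved by the usual continuation argument on the maximal interval where the trajectory remains in the nonnegative orthant.
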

\begin{proof}
Note that $\max_{0\leq k\leq n} \tilde{s}_k(0)\leq 1.$ Furthermore, if $$\tilde{s}_j(t)=1=\max_{0\leq k\leq n} \tilde{s}_k(t)\hbox{ and } \sup{0\leq\tau\leq t}\max_{0\leq k\leq n} \tilde{s}_k(\tau)\leq 1$$  then $\dot{\tilde{s}}_j(t)\leq 0.$ Therefore, we conclude that for any $t\geq 0,$ $$\max_{0\leq k\leq n} \tilde{s}_k(t)\leq 1$$ and the lemma holds.
\end{proof}

We next define $x_k=\tilde{s}_k-s_k^*$ for $1\leq k\leq n,$ so
\begin{align}
\dot{x}_k&=f_k({\bf x})\\
&:=\left\{
               \begin{array}{ll}
                 \lambda(\left(x_{k-1}+s_{k-1}^*\right)^2-\left(x_k+s_k^*\right)^2)-(x_k+s_k^*-x_{k+1}-s^*_{k+1}), & 1\leq k\leq n-1\\
                \lambda(\left(x_{n-1}+s_{n-1}^*\right)^2-\left(x_n+s_n^*\right)^2)-(x_n+s_n^*-s^*_{n+1}), & k=n
               \end{array}
             \right.\nonumber\\
&=\left\{
               \begin{array}{ll}
                -\lambda \left(x^2_1+2s_1^*x_1\right)-(x_1-x_2), & k=1\\
             \lambda(\left(x^2_{k-1}+2 s_{k-1}^* x_{k-1} \right)-\left(x^2_k+2s_k^*x_k\right))-(x_k-x_{k+1}), & 2\leq k\leq n-1\\
                \lambda(\left(x_{n-1}^2+2s_{n-1}^*x_{n-1}\right)-\left(x_n^2+2s_n^*x_n\right))-x_n, & k=n
               \end{array}
             \right.\label{ds:truncated}
\end{align}
The unique equilibrium point for the system is ${\bf x}=0.$

\begin{lemma}
Under the dynamical system defined in (\ref{ds:truncated}), $-s_k^*\leq x_k(t)\leq 1-s_k^*$ for $1\leq k \leq n$ and  all $t\geq 0.$\label{lem:xbound}
\end{lemma}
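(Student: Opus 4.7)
The plan is to translate the desired bounds on $x_k$ into bounds on $\tilde{s}_k = x_k + s_k^*$. Since $-s_k^* \leq x_k(t) \leq 1-s_k^*$ is equivalent to $0 \leq \tilde{s}_k(t) \leq 1$, and the upper bound $\tilde{s}_k(t) \leq 1$ is exactly what Lemma \ref{lem:bound} provides (under the natural initial condition $\tilde{s}_k(0) \in [0,1]$ inherited from the CTMC state space), the entire task reduces to establishing the lower bound $\tilde{s}_k(t) \geq 0$ for all $t \geq 0$ and $1 \leq k \leq n$.

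For this lower bound I would mimic the boundary-preservation argument used in the proof of Lemma \ref{lem:bound}, but applied at zero instead of at one. Assume the initial condition satisfies $\tilde{s}_k(0) \geq 0$ for all $1 \leq k \leq n$. If some component were ever to become strictly negative, continuity of the flow gives a first time $t_0 > 0$ and an index $j$ with $\tilde{s}_j(t_0) = 0$ while $\tilde{s}_k(\tau) \geq 0$ for all $k$ and all $\tau \leq t_0$. I would then read off $\dot{\tilde{s}}_j(t_0)$ directly from (\ref{eq: po2}): for $1 \leq j \leq n-1$ the expression becomes $\lambda\, \tilde{s}_{j-1}(t_0)^2 + \tilde{s}_{j+1}(t_0) \geq 0$, and for $j = n$ it becomes $\lambda\, \tilde{s}_{n-1}(t_0)^2 + s^*_{n+1} \geq 0$, using $s^*_{n+1} = \lambda^{2^{n+1}-1} > 0$. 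A nonnegative derivative at the first touching time precludes $\tilde{s}_j$ from crossing into negative values, contradicting the definition of $t_0$.

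The main obstacle is bookkeeping rather than depth. Two small technical points require care: the boundary component $k = n$ must be handled separately because its dynamics involve the frozen value $s^*_{n+1}$ rather than a time-varying $\tilde{s}_{n+1}(t)$, and the ``first-touching-time'' step should be justified from the fact that the vector field $f_k$ is locally Lipschitz on the compact region $[0,1]^n$ that Lemma \ref{lem:bound} already supplies as an invariant upper envelope, so the solution stays continuous and a well-defined infimum exists. Once these are noted, the lower-bound argument is a direct mirror of the upper-bound argument in Lemma \ref{lem:bound}, and combining the two completes the claim.
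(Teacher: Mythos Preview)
Your proposal is correct and follows the same reduction as the paper: translate the bounds on $x_k$ into $0\le \tilde s_k(t)\le 1$ and invoke Lemma~\ref{lem:bound} for the upper bound. The paper's entire proof is the single line ``According to Lemma~\ref{lem:bound}, $0\le x_k(t)+s_k^*=s_k(t)\le 1$ for any $t\ge 0$, so the lemma holds,'' which cites Lemma~\ref{lem:bound} for \emph{both} inequalities even though that lemma, as stated, only establishes $\tilde s_k(t)\le 1$. Your first-touching-time argument for $\tilde s_k(t)\ge 0$ therefore supplies a detail the paper leaves implicit; the boundary case $k=n$ and the Lipschitz/continuity remark are exactly the bookkeeping needed to make that step rigorous. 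In short, you and the paper take the same route, but your version is the more complete of the two.
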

\begin{proof}
According to Lemma \ref{lem:bound}, $0\leq x_k(t)+s_k^*=s_k(t)\leq 1$ for any $t\geq 0,$ so the lemma holds.
\end{proof}

For ease of notation, we always use ${\bf x}$ to denote an infinite-dimensional vector in this section and define $f_k({\bf x})=0$ for $k>n.$ Let $g({\bf x})$ be the solution to the Poisson equation
\begin{equation}
\triangledown g({\bf x})\cdot \dot{\bf x}=\triangledown g({\bf x})\cdot f({\bf x})=\sum_{k=1}^n x_k^2.\label{eq:poisson}
\end{equation}
Then, $$g({\bf x})=-\int_0^\infty \sum_{k=1}^n \left(x_k(t, {\bf x})\right)^2\, dt$$ when the integral is finite \cite{Bar_88,Got_91}, where $x_k(t, {\bf x})$ is the solution of the dynamical system (\ref{ds:truncated}) with ${\bf x}$ as the initial condition. Note that the solution only depends on the first $n$ components of the initial condition $\bf x.$ The integral is finite since the system is exponentially stable as shown in Lemma \ref{lem:expstable} in Section \ref{sec:pert} and $\sum_{k=1}^n |x_k(0)|\leq n$ according to Lemma \ref{lem:xbound}.

Now let $G_M$ denote the generator of the $M$th CTMC. Define $$Q_{{\bf x},{\bf y}}=Q_{{{\bf x}+{\bf s}^*,{\bf y}+{\bf s}^*}}$$ and $$q_{{\bf x},{\bf y}}=\frac{1}{M}Q_{{\bf x},{\bf y}}.$$ Then,
\begin{eqnarray*}
G_Mg({\bf x}) &=& \sum_{{\bf y}: {\bf y}\not={\bf x}}Q_{{\bf x},{\bf y}}({\bf x})\left(g({\bf y})-g({\bf x})\right)\\
&=&M\sum_{{\bf y}: {\bf y}\not={\bf x}}q_{{\bf x},{\bf y}}({\bf x})\left(g({\bf y})-g({\bf x})\right).
\end{eqnarray*}

It has been proved in \cite{Mit_96,VveDobKar_96} that ${\bf x}^{(M)}$ has a stationary distribution.  We use $E_M[\cdot]$ throughout to denote the stationary expectation of the system with $M$ servers. Based on the basic adjoint relation (BAR) \cite{GlyZee_08},
\begin{eqnarray}
E_M\left[G_Mg({\bf x})\right]
=E_M\left[M\sum_{{\bf y}: {\bf y}\not={\bf x}}q_{{\bf x},{\bf y}}({\bf x})\left(g({\bf y})-g({\bf x})\right)\right]=0.\label{eq:ss}
\end{eqnarray}
Then by taking expectation of the Poisson equation (\ref{eq:poisson}) and adding (\ref{eq:ss}) to the equation, we obtain
\begin{eqnarray}
E_M\left[\sum_{k=1}^n x_k^2\right]=E_M\left[\triangledown g({\bf x})\cdot f({\bf x})-M\sum_{{\bf y}: {\bf y}\not={\bf x}}q_{{\bf x},{\bf y}}({\bf x})\left(g({\bf y})-g({\bf x})\right)\right].\nonumber
\end{eqnarray}

From the transition rates of the CTMC under the power-of-two-choices (\ref{eq:tranrate}), we obtain $$\sum_{{\bf y}: {\bf y}\not={\bf x}}q_{{\bf x},{\bf y}}M(y_k-x_k)=\lambda\left(\left(x^2_{k-1}+2 s_{k-1}^* x_{k-1} \right)-\left(x^2_k+2s_k^*x_k\right)\right)-(x_k-x_{k+1}).$$ The definition of $f_k(x)$ (\ref{ds:truncated}) further implies  $$f_k({\bf x})=\left\{
                                         \begin{array}{ll}
                                           \sum_{{\bf y}: {\bf y}\not={\bf x}}q_{{\bf x},{\bf y}}M(y_k-x_k), & \hbox{ if } 1\leq k <n\\
                                           \sum_{{\bf y}: {\bf y}\not={\bf x}}q_{{\bf x},{\bf y}}M(y_n-x_n)-x_{n+1}, & \hbox{ if } k=n.
                                         \end{array}
                                       \right.,$$ and
\begin{eqnarray*}
\triangledown g({\bf x})\cdot f({\bf x})&=&\sum_{k=1}^n  \frac{\partial g}{\partial x_k} \left( \sum_{{\bf y}: {\bf y}\not={\bf x}}q_{{\bf x},{\bf y}}M(y_k-x_k)\right)-\frac{\partial g}{\partial x_n}({\bf x}){x}_{n+1}\\
&=&\sum_{{\bf y}: {\bf y}\not={\bf x}}q_{{\bf x},{\bf y}}M \sum_{k=1}^n  \frac{\partial g}{\partial x_k} \left( y_k-x_k\right)-\frac{\partial g}{\partial x_n}({\bf x}){x}_{n+1}\\
&=&\sum_{{\bf y}: {\bf y}\not={\bf x}}q_{{\bf x},{\bf y}}M \triangledown g({\bf x})\cdot({\bf y}-{\bf x})-\frac{\partial g}{\partial x_n}({\bf x}){x}_{n+1},
\end{eqnarray*} where the last equality holds because $\frac{\partial g}{\partial x_k}=0$ for $k>n.$
Therefore, we have
\begin{align}
&E_M\left[\sum_{k=1}^n x_k^2\right]\\
=&E_M\left[ -\frac{\partial g}{\partial x_n}({\bf x}){x}_{n+1}-M\sum_{{\bf y}: {\bf y}\not={\bf x}}q_{{\bf x},{\bf y}}({\bf x})\left(g({\bf y})-g({\bf x})-\triangledown g({\bf x})\cdot({\bf y}-{\bf x})\right)\right].
\end{align} Since the mean-field model (\ref{ds:truncated}) is a truncated system, we have the additional term $-\frac{\partial g}{\partial x_n}({\bf x}){x}_{n+1}$ compared to the similar equation in \cite{Yin_16}. In the remaining of this paper, we will derive an upper bound on the mean-square error $E_M\left[\sum_{k=1}^\infty x_k^2\right]$ by first deriving an upper bound on  $E_M\left[\sum_{k=1}^n x_k^2\right].$ The result will be proved based on a sequence of convergence results of the dynamical system (\ref{ds:truncated}) to be proved in Section \ref{sec:pert}.

\begin{lemma} Under the power-of-two-choices and given $n\log n=o(M),$ we have
\begin{eqnarray}
E_M\left[\sum_{k=1}^n x_k^2\right]&=& O\left({\lambda^n} n\log n\right)+O\left(\frac{n^3(\log n)^2}{M}\right).\label{eq:bound2}
\end{eqnarray}
\end{lemma}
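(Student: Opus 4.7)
The plan is to start from the identity derived immediately before the lemma statement,
\begin{equation*}
E_M\!\left[\sum_{k=1}^n x_k^2\right] \;=\; -E_M\!\left[\tfrac{\partial g}{\partial x_n}({\bf x})\,x_{n+1}\right] \;-\; E_M\!\left[M\!\sum_{{\bf y}\neq{\bf x}}\!q_{{\bf x},{\bf y}}\bigl(g({\bf y})-g({\bf x})-\triangledown g({\bf x})\cdot({\bf y}-{\bf x})\bigr)\right],
\end{equation*}
and bound its two pieces separately: the Stein-method error will give the $O(n^3(\log n)^2/M)$ contribution and the truncation term will give the $O(\lambda^n n\log n)$ contribution.

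For the Stein term I would apply the second-order Taylor remainder
\begin{equation*}
\bigl|g({\bf y})-g({\bf x})-\triangledown g({\bf x})\cdot({\bf y}-{\bf x})\bigr|\;\leq\;\tfrac{1}{2}\,H\,\|{\bf y}-{\bf x}\|_1^2,\qquad H:=\sup_{\bf x}\,\max_{1\leq j,k\leq n}\!\left|\tfrac{\partial^2 g}{\partial x_j\,\partial x_k}({\bf x})\right|.
\end{equation*}
Because every CTMC transition changes a single coordinate by $\pm 1/M$ and the total outgoing rate from any state is at most $(1+\lambda)M$, the inner sum collapses to $O(H/M)$ uniformly in $\bf x$, and hence so does its expectation. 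The problem thus reduces to proving $H=O(n^3(\log n)^2)$. I would obtain this by differentiating the integral representation $g({\bf x})=-\int_0^\infty\sum_{k=1}^n(x_k(t,{\bf x}))^2\,dt$ twice and invoking the exponential stability of (\ref{ds:truncated}) claimed as Lemma \ref{lem:expstable}, together with polynomial-in-$n$ bounds on the first- and second-order sensitivities $\partial x_k(t,{\bf x})/\partial x_j$ and $\partial^2 x_k(t,{\bf x})/\partial x_j\,\partial x_\ell$. The $n^3$ arises from summing over coordinates and sensitivity indices in the integrand, while the $(\log n)^2$ reflects that the relaxation time of the $n$-dimensional linearisation degrades polylogarithmically in $n$.

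For the truncation term I would bound
\begin{equation*}
\left|E_M\!\left[\tfrac{\partial g}{\partial x_n}({\bf x})\,x_{n+1}\right]\right|\;\leq\;\left\|\tfrac{\partial g}{\partial x_n}\right\|_\infty\,E_M\!\bigl[|x_{n+1}|\bigr].
\end{equation*}
A first-derivative estimate $\|\partial g/\partial x_n\|_\infty=O(n\log n)$ follows from the same integral representation with one fewer factor of $x$ than in the Hessian bound. For $E_M[|x_{n+1}|]$ I would write $|x_{n+1}|\leq s_{n+1}^{(M)}+s_{n+1}^*$ and use a standard stochastic-dominance comparison of power-of-two with random routing, under which the $M$ queues are independent $M/M/1$'s of load $\lambda$ and so $E[s_{n+1}^{(M)}]\leq\lambda^{n+1}$; since $s_{n+1}^*=\lambda^{2^{n+1}-1}\leq\lambda^{n+1}$, this yields $E_M[|x_{n+1}|]=O(\lambda^n)$, and multiplication gives the claimed $O(\lambda^n n\log n)$ contribution.

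The main obstacle is the pair of derivative estimates $\|\partial g/\partial x_n\|_\infty=O(n\log n)$ and $H=O(n^3(\log n)^2)$. Both depend on a delicate quantitative perturbation analysis of the truncated nonlinear system (\ref{ds:truncated}), whose Jacobian at ${\bf x}=0$ is highly non-normal: it is bidiagonal with rows of vastly different magnitudes because the coefficients $2\lambda s_k^*$ decay double-exponentially in $k$. Producing an exponential-stability rate and sensitivity bounds that are polynomial in $n$ despite this imbalance is precisely what Section \ref{sec:pert} sets up to do. Once those estimates are in hand, the bookkeeping above is routine, and the hypothesis $n\log n=o(M)$ is used only to keep the Taylor-remainder argument meaningful on the support of the stationary distribution and to let the two error terms combine cleanly.
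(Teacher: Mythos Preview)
Your decomposition and the handling of the truncation term match the paper exactly, including the use of the stochastic-dominance bound $E_M[s_{n+1}^{(M)}]\le\lambda^{n+1}$ and the first-derivative estimate $\|\partial g/\partial x_n\|_\infty=O(n\log n)$.

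The difference is in how the Stein remainder is controlled. You propose a uniform Hessian bound $H=\sup_{j,k}|\partial^2 g/\partial x_j\partial x_k|=O(n^3(\log n)^2)$ obtained by differentiating the integral formula for $g$ twice and bounding second-order flow sensitivities. The paper never computes $\partial^2 g$ or $\partial^2 x_k/\partial x_j\partial x_\ell$. Instead it writes
\[
g({\bf y})-g({\bf x})-\nabla g({\bf x})\cdot({\bf y}-{\bf x})
= -\int_0^\infty \sum_{k=1}^n\Bigl[(x_k(t,{\bf y}))^2-(x_k(t,{\bf x}))^2-2x_k(t,{\bf x})\,\nabla x_k(t,{\bf x})\cdot({\bf y}-{\bf x})\Bigr]\,dt,
\]
introduces the \emph{finite-difference} flow remainder $e_k(t)=x_k(t,{\bf y})-x_k(t,{\bf x})-\nabla x_k(t,{\bf x})\cdot({\bf y}-{\bf x})$, and bounds $\sum_k|e_k(t)|$ directly by a two-phase ODE argument (Lemmas~\ref{lem:etsmallt} and~\ref{lem:etlarget}): a crude Gr\"onwall bound on $[0,\tilde t]$ with $\tilde t=\Theta(n\log n)$, then exponential decay at rate $\tilde\delta=\Theta(1/n)$ afterwards, giving $\int_0^\infty\sum_k|e_k|\,dt=O((n\log n)^2/M^2)$. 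One extra factor of $n$ enters because the integrand is multiplied by $|x_k(t,{\bf x})|\le|{\bf x}(t)|\le n$. Your Hessian route would work in principle, but Section~\ref{sec:pert} supplies exactly the $e_k$ bounds, not second-derivative bounds, so you would have to redo the perturbation analysis.

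One heuristic in your write-up is off: the relaxation rate of the truncated system is $\Theta(1/n)$, not polylogarithmic; the $(\log n)^2$ factor comes from $\tilde t^2$, where $\tilde t=\delta^{-1}\log(32n)=\Theta(n\log n)$ is the time needed to drive $|{\bf x}^{(0)}(t)|$ from $O(n)$ down to $O(1)$ so that the sharper weighted Lyapunov argument applies. The hypothesis $n\log n=o(M)$ is used precisely here, to ensure $|e_k(t)|$ and $\epsilon|x_k^{(1)}(t)|$ are small enough on $[0,\tilde t]$ for the two-phase argument to close.
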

\begin{proof}
The proof is based on several convergence properties of the dynamical system (\ref{ds:truncated}). The proofs of these convergence properties can be found in Section \ref{sec:pert}.

According to Lemma \ref{lem:expstable} and Corollary \ref{cor:1st-decay} in Section \ref{sec:pert}, both $\sum_{k=1}^n |x_k(t, {\bf x})|$  and $\sum_{k=1}^n |\triangledown x_k(t, {\bf x})|$ decay exponentially as $t$ increases. Furthermore, $|{\bf y}-{\bf x}|=1/M$ for any $\bf x$ and $\bf y$ such that $Q_{\bf x, \bf y}\not=0$ according to (\ref{eq:tranrate}).  Therefore  $$\int_0^\infty 2x_k(t, {\bf x})\triangledown x_k(t, {\bf x})\cdot({\bf y}-{\bf x})\,dt$$ is finite, and by exchanging the order of integration and differentiation, we obtain \begin{eqnarray}
\triangledown g({\bf x})\cdot({\bf y}-{\bf x})=\int_0^\infty \sum_{k=1}^n 2x_k(t, {\bf x})\triangledown x_k(t, {\bf x})\cdot({\bf y}-{\bf x})\,dt.
\end{eqnarray}  Hence, we have
\begin{align}
&-\left(g({\bf y})-g({\bf x})-\triangledown g({\bf x})\cdot({\bf y}-{\bf x})\right)\nonumber\\
=&\int_0^\infty \sum_{k=1}^n \left(\left(x_k(t, {\bf y})\right)^2-\left(x_k(t, {\bf x})\right)^2-2x_k(t, {\bf x})\triangledown x_k(t, {\bf x})\cdot({\bf y}-{\bf x})\right)\,dt.\label{eq:2nd-deri}
\end{align}

We now define \begin{equation*}
e_k(t)=x_k(t, {\bf y})-x_k(t, {\bf x})-\triangledown x_k(t, {\bf x})\cdot({\bf y}-{\bf x}),
\end{equation*} i.e.,
\begin{equation*}
x_k(t, {\bf y})=e_k(t)+x_k(t, {\bf x})+\triangledown x_k(t, {\bf x})\cdot({\bf y}-{\bf x}),
\end{equation*}
so
\begin{eqnarray*}
&&\left(x_k(t, {\bf y})\right)^2-\left(x_k(t, {\bf x})\right)^2-2x_k(t, {\bf x})\triangledown x_k(t, {\bf x})\cdot({\bf y}-{\bf x})\\
&=&\left(e_k(t)+x_k(t, {\bf x})+\triangledown x_k(t, {\bf x})\cdot({\bf y}-{\bf x})\right)^2-\left(x_k(t, {\bf x})\right)^2-2x_k(t, {\bf x})\triangledown x_k(t, {\bf x})\cdot({\bf y}-{\bf x})\\
&=&e^2_k(t)+\left(\triangledown x_k(t, {\bf x})\cdot({\bf y}-{\bf x})\right)^2+2e_k(t)\triangledown x_k(t, {\bf x})\cdot({\bf y}-{\bf x})+2e_k(t)x_k(t, {\bf x})\\
&=&e_k(t)\left(e_k(t)+2\triangledown x_k(t, {\bf x})\cdot({\bf y}-{\bf x})+2x_k(t, {\bf x})\right)+\left(\triangledown x_k(t, {\bf x})\cdot({\bf y}-{\bf x})\right)^2.
\end{eqnarray*}

According to Lemmas \ref{lem:etsmallt} and \ref{lem:etlarget} in Section \ref{sec:pert}, $$|e_k(t)|\leq \sum_{k=1}^n |{e}_k(t)|=O\left(\frac{n\log n}{M^2}\right).$$ Under the power-of-two-choices algorithm, $\sum_{k=1}^{\infty} |x_k-y_k|= \frac{1}{M}$ for any ${\bf x}$ and $\bf y$ such that $Q_{{\bf x},{\bf y}}\not=0.$ Furthermore, $|\triangledown x_k(t, {\bf x})|$ is bounded by $n$ according to Corollary \ref{cor:1st-decay} and
$|x_k(t, {\bf x})|$ is bounded by $|{\bf x}(0)|\leq n$ according to Lemma \ref{lem:exstable}. Therefore, given $n\log n=o(M),$ e.g., $n=\Theta(\log M),$ we can choose a sufficiently large $\tilde{M}$ such that for any $M\geq \tilde{M},$
$$\left|e_k(t)+2\triangledown x_k(t, {\bf x})\cdot({\bf y}-{\bf x})+2\left(x_k(t, {\bf x})\right)\right|\leq 1+2|x_k(t, {\bf x})|\leq 1+2|{\bf x}(t, {\bf x})|\leq 3n,$$ where the last inequality follows from Lemma \ref{lem:exstable}, which implies that
\begin{align}
&\left|g({\bf y})-g({\bf x})-\triangledown g({\bf x})\cdot({\bf y}-{\bf x})\right|\nonumber\\
\leq& 3n{\int}_0^\infty \sum_k|{e}_k(t)|\,dt+\int_0^\infty \sum_k \left(\triangledown x_k(t, {\bf x})\cdot({\bf y}-{\bf x})\right)^2\,dt.\label{eq:boundong}
\end{align}  In Theorem \ref{thm:errorbound},  we will establish that $${\int}_0^\infty \sum_{k=1}^n |{e}_k(t)|\,dt=O\left(\frac{(n\log n)^2}{M^2}\right).$$
Under the power-of-two-choices,  for any $\bf x$ and $\bf y$ such that $Q_{{\bf x}, {\bf y}}\not=0,$ there exists $j$ such that $|x_j-y_j|=1/M$ and $|x_h-x_h|=0$ for $h\not=j.$  Therefore,
\begin{align*}
&\int_0^\infty \sum_{k=1}^n \left(\triangledown x_k(t, {\bf x})\cdot({\bf y}-{\bf x})\right)^2\,dt = \frac{1}{M^2} \int_0^\infty \sum_{k=1}^n \left(\frac{\partial}{\partial x_j}x_k(t, {\bf x})\right)^2\,dt\\
&\leq  \frac{1}{M^2} \int_0^\infty  \left( \sum_{k=1}^n\left|\frac{\partial}{\partial x_j}x_k(t, {\bf x})\right|\right)^2\,dt
\end{align*}
From Corollary \ref{cor:1st-decay}, we have
\begin{eqnarray*}
\int_0^\infty  \left( \sum_{k=1}^n\left|\frac{\partial}{\partial x_j}x_k(t, {\bf x})\right|\right)^2\,dt=O\left(n\log n\right).
\end{eqnarray*} Therefore, we can conclude that
\begin{eqnarray}
\left|g({\bf y})-g({\bf x})-\triangledown g({\bf x})\cdot({\bf y}-{\bf x})\right|=O\left(\frac{n^3(\log n)^2}{M^2}\right).
\label{eq:order}
\end{eqnarray}

Furthermore, from Corollary \ref{cor:1st-decay}, we have
\begin{eqnarray*}
\left|\frac{\partial g}{\partial x_n}({\bf x})\right|&=&\left|-\int_0^\infty \sum_{k=1}^n 2x_k(t, {\bf x}) \frac{\partial}{\partial x_n}x_k(t, {\bf x})\, dt\right|\\
&\leq& 2\left|\int_0^\infty \sum_{k=1}^n \left|\frac{\partial}{\partial x_n}x_k(t, {\bf x})\right|\, dt\right|\\
&=&O(n\log n).
\end{eqnarray*} It has been shown in \cite{VveDobKar_96} that $E_M\left[s_{n}\right]\leq \lambda^n$ for any $n\geq 0,$ so when $n$ is sufficiently large, $$E_M\left[|{x}_{n+1}|\right]\leq E_M\left[|s_{n+1}|\right]+s^*_{n+1}\leq \lambda^{n+1}+\lambda^{2^{n+1}-1}\leq \lambda^n.$$ Therefore, we conclude
\begin{eqnarray*}
E_M\left[\sum_{k=1}^n \left(x_k\right)^2\right]&=& O\left({\lambda^n} n\log n\right)+O\left(\frac{n^3(\log n)^2}{M}\right).
\end{eqnarray*}
\end{proof}

From the lemma above, we can conclude the following theorem.
\begin{theorem}
Under the power-of-two-choices with $\lambda<1,$
$$E_M\left[\sum_{k=1}^\infty \left|s^{(M)}_k-s_k^*\right|^2\right]=E_M\left[\sum_{k=1}^\infty x_k^2\right]=O\left(\frac{(\log M)^3(\log\log M)^2}{M}\right).$$
\end{theorem}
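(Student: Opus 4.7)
The plan is to extend the preceding lemma, which controls only the truncated sum $E_M[\sum_{k=1}^n x_k^2]$, to the full infinite sum by bounding the tail $\sum_{k=n+1}^\infty x_k^2$ separately, and then to optimize the dimension $n$ of the truncation in terms of $M$.

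First I would split
$$E_M\left[\sum_{k=1}^\infty x_k^2\right] = E_M\left[\sum_{k=1}^n x_k^2\right] + E_M\left[\sum_{k=n+1}^\infty x_k^2\right].$$
The first term is controlled immediately by the preceding lemma, giving $O(\lambda^n n \log n) + O(n^3 (\log n)^2 / M)$. For the tail, the key observation is that $x_k = s_k^{(M)} - s_k^*$ with both $s_k^{(M)}, s_k^* \in [0,1]$, so $x_k^2 \leq |x_k| \leq s_k^{(M)} + s_k^*$. Hence
$$E_M\left[\sum_{k=n+1}^\infty x_k^2\right] \leq \sum_{k=n+1}^\infty E_M\left[s_k^{(M)}\right] + \sum_{k=n+1}^\infty s_k^*.$$
The result of \cite{VveDobKar_96}, already invoked in the lemma's proof in the form $E_M[s_k^{(M)}] \leq \lambda^k$, reduces the first sum to a geometric tail $O(\lambda^n / (1-\lambda)) = O(\lambda^n)$; the second sum is even smaller since $s_k^* = \lambda^{2^k-1} \leq \lambda^k$. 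So the tail contributes only $O(\lambda^n)$, which is of lower order than the $O(\lambda^n n \log n)$ term already present.

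Finally I would balance the two remaining error terms by choosing $n = \lceil c \log M \rceil$ with $c = 1/\log(1/\lambda)$, so that $\lambda^n \leq 1/M$. Under this choice, the truncation term satisfies $\lambda^n n \log n = O(\log M \cdot \log \log M / M)$, which is dominated by the Stein/perturbation term $n^3 (\log n)^2 / M = O((\log M)^3 (\log \log M)^2 / M)$. Combining these observations yields the claimed rate. The hypothesis $n \log n = o(M)$ required by the lemma is automatically satisfied since $n = \Theta(\log M)$.

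The main obstacle here is essentially nil: all the analytic work has already been done in the preceding lemma (and in the perturbation-theoretic estimates of Section \ref{sec:pert} that it relies on). The theorem itself is a short corollary obtained by combining the truncation-error bound with the cheap geometric tail bound and then picking $n$ optimally. The one place where care is warranted is in verifying the tail bound: I would note that the bound $E_M[s_k^{(M)}] \leq \lambda^k$ is a stationary-distribution statement from \cite{VveDobKar_96} that is valid uniformly in $M$, which is exactly what is needed for the inequality above to hold in expectation with respect to the $M$th stationary distribution.
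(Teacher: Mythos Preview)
Your proposal is correct and follows essentially the same approach as the paper: split the sum at level $n$, bound the tail via $E_M[s_k]\leq \lambda^k$ from \cite{VveDobKar_96} (the paper uses $x_k^2\leq s_k^2+(s_k^*)^2\leq s_k+(s_k^*)^2$ rather than your $x_k^2\leq |x_k|\leq s_k+s_k^*$, but the effect is identical), and then choose $n=\Theta(\log M)$. The only cosmetic difference is that the paper takes $n=\tfrac{3\log M}{\log(1/\lambda)}$ while you take $c=1/\log(1/\lambda)$; either constant works since the $n^3(\log n)^2/M$ term dominates.
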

\begin{proof}
By choosing $n=\frac{3\log M}{\log \frac{1}{\lambda}}$ in equation (\ref{eq:bound2}), we obtain
 $$E_M\left[\sum_{k=1}^{n} x_k^2\right]=O\left(\frac{(\log M)^3(\log\log M)^2}{M}\right).$$
Since  $E_M\left[s_k\right]\leq \lambda^k$ and $0\leq s_k\leq 1$ for all $k,$
$$E_M\left[\sum_{k=n+1}^\infty x_k^2\right]\leq E_M\left[\sum_{k=n+1}^\infty s_k^2+\left(s^*_k\right)^2\right]\leq E_M\left[\sum_{k=n+1}^\infty s_k+\left(s^*_k\right)^2\right]\leq \frac{\lambda^{n}}{1-\lambda},$$ where the last inequality holds when $n$ is sufficiently large. The theorem holds.
\end{proof}

\section{Convergence Properties of the Dynamical System (\ref{ds:truncated})}
\label{sec:pert}

In this section, we analyze the dynamical system defined by (\ref{ds:truncated}), and present the convergence properties used in the previous section. Since the system is an $n$-dimensional truncated system of the original infinite-dimensional dynamical system, in this section, all vectors are $n$-dimensional instead of infinite-dimensional.

\begin{lemma}
Under the dynamical system defined in (\ref{ds:truncated}), we have for any $t\geq 0,$ $$|{\bf x}(t)|\leq |{\bf x}(0)|.$$\label{lem:exstable}
\end{lemma}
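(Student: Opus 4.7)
The plan is to interpret $|{\bf x}|$ as the $\ell_1$ norm $\sum_{k=1}^n |x_k|$ — this is the norm implicitly used later in the excerpt, since the bound $|{\bf x}(0)| \leq n$ is invoked together with the componentwise bound $|x_k| \leq 1$ from Lemma \ref{lem:xbound}. I would prove the stronger statement that $V(t) := \sum_{k=1}^n |x_k(t)|$ is non-increasing along trajectories of (\ref{ds:truncated}).

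First, I would rewrite (\ref{ds:truncated}) in an affine birth--death form. Using $\tilde s_j = x_j + s_j^*$, we have the identity $x_j^2 + 2 s_j^* x_j = x_j(\tilde s_j + s_j^*)$, so defining the nonnegative coefficients $a_j(t) := \lambda(\tilde s_j(t) + s_j^*)$ converts (\ref{ds:truncated}) into
\begin{equation*}
\dot x_k = a_{k-1}(t)\, x_{k-1} - (a_k(t) + 1)\, x_k + x_{k+1}\, \mathbf{1}_{k<n}, \qquad 1 \le k \le n,
\end{equation*}
with the convention $x_0 \equiv 0$ (since $\tilde s_0 = 1 = s_0^*$).

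Next, at any time $t$ where no $x_k(t)$ vanishes, $V$ is differentiable and the trivial bound $\mathrm{sign}(x_k)\, y \leq |y|$ gives
\begin{equation*}
\dot V(t) = \sum_{k=1}^n \mathrm{sign}(x_k)\, \dot x_k \;\leq\; \sum_{k=1}^n \Bigl[ a_{k-1}|x_{k-1}| - (a_k+1)|x_k| + |x_{k+1}|\,\mathbf{1}_{k<n} \Bigr].
\end{equation*}
Reindexing the first sum (using $x_0 = 0$) and the third sum makes the interior $a_k|x_k|$ and $|x_k|$ contributions cancel in pairs, leaving only boundary terms,
\begin{equation*}
\dot V(t) \;\leq\; -|x_1(t)| - a_n(t)\,|x_n(t)| \;\leq\; 0.
\end{equation*}

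To handle the measure-zero set of times where some $x_k(t) = 0$, I would either smooth $|\cdot|$ by $\sqrt{x^2 + \epsilon^2}$ and pass $\epsilon \downarrow 0$, or invoke the Dini-derivative version of the chain rule: $D^+|x_k|(t) \leq s_k(t)\,\dot x_k(t)$ for some $s_k(t) \in [-1,1]$, for which the same inequality $s_k(t)\, y \leq |y|$ is still valid. Since $V$ is absolutely continuous (each $x_k$ is $C^1$), $D^+V \le 0$ a.e.\ implies $V(t) \leq V(0)$. The main obstacle is essentially cosmetic: finding the right algebraic rewriting so that the interior terms telescope and pairing it with the standard non-smooth chain rule; the core inequality is then mechanical.
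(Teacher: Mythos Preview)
Your proof is correct and follows essentially the same approach as the paper: both take $V(t)=\sum_{k=1}^n|x_k(t)|$ as a Lyapunov function, use the Dini/upper-right derivative to handle the non-smooth points, and show $\dot V\le 0$ by exploiting the telescoping structure of the tridiagonal dynamics together with the nonnegativity $\tilde s_k\ge 0$ from Lemma~\ref{lem:xbound}. Your affine rewriting $\dot x_k=a_{k-1}x_{k-1}-(a_k+1)x_k+x_{k+1}\mathbf{1}_{k<n}$ with $a_j=\lambda(\tilde s_j+s_j^*)\ge 0$ is a cleaner packaging than the paper's sign-by-sign case analysis and term regrouping into $W_k$, but the mathematical content is identical.
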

\begin{proof}
The proof follows the proof of Theorem 3.6 in \cite{Mit_96}. Define the Lyapunov function to be $$V(t)=\sum_{k=1}^n |x_k(t)|.$$ Note that $\frac{d|x_k(t)|}{dt}$ is well-defined for $x_k(t)\not=0.$ When $x_k(t)=0,$ we consider the upper right-hand derivative as in \cite{Mit_96}, i.e.,
\begin{eqnarray*}
\frac{d|x_k(t)|}{dt}=\lim_{\tau\rightarrow t^+}\frac{|x_k(\tau)|-|x_k(t)|}{\tau-t}=\left\{
                                                                                     \begin{array}{ll}
                                                                                       \dot{x}_k(t), & \hbox{ if } \dot{x}_k(t)\geq 0\\
                                                                                       -\dot{x}_k(t), & \hbox{ if }\dot{x}_k(t)< 0
                                                                                     \end{array}
                                                                                   \right..
\end{eqnarray*}

Consider $k$ such that $n>k>1.$ If $x_k(t)>0,$ or $x_k(t)=0$ and $\dot{x}_k\geq 0,$ then
\begin{eqnarray*}
\frac{d|x_k(t)|}{dt}=\frac{dx_k(t)}{dt}&=&\lambda\left(\left(x^2_{k-1}+2 s_{k-1}^* x_{k-1} \right)-\left(x^2_k+2s_k^*|x_k|\right)\right)-(|x_k|-x_{k+1})\\
&=&\lambda\left(x^2_{k-1}+2 s_{k-1}^* x_{k-1} \right)-\lambda \left(x^2_k+2s_k^*|x_k|\right)-|x_k|+x_{k+1}.
\end{eqnarray*}
Define $\dot{V}(t)=\sum_{k=1}^n W_k(t)$ such that $W_k(t)$ includes all the terms involving $x_k(t).$ Then when $x_k(t)>0,$ or $x_k(t)=0$ and $\dot{x}_k\geq 0,$ we have
\begin{eqnarray*}
W_k(t)&\leq &\lambda\left(x^2_{k}+2 s_{k}^* |x_{k}| \right)-\lambda \left(x^2_k+2s_k^*|x_k|\right)-|x_k|+|x_{k}|=0,
\end{eqnarray*} where the first term comes from $\frac{d|x_{k+1}(t)|}{dt}$ and the last term comes from $\frac{d|x_{k-1}(t)|}{dt}.$

If $x_k(t)<0,$  or $x_k(t)=0$ and $\dot{x}_k< 0,$ then
\begin{eqnarray*}
\frac{d|x_k(t)|}{dt}=-\frac{dx_k(t)}{dt}&= &-\lambda\left(x^2_{k-1}+2 s_{k-1}^* x_{k-1} \right)+\lambda \left(x^2_k-2s_k^*|x_k|\right)-|x_k|-x_{k+1},
\end{eqnarray*} so
\begin{eqnarray*}
W_k(t)&\leq &\lambda\left(-x^2_{k}+2 s_{k}^* |x_{k}| \right)+\lambda \left(x^2_k-2s_k^*|x_k|\right)-|x_k|+|x_{k}|=0,
\end{eqnarray*} where the last term comes from $\frac{d|x_{k-1}(t)|}{dt},$ and the first term comes from $\frac{d|x_{k+1}(t)|}{dt}$ and the fact that $x_k(t)\geq -s_k^*$ when $x_k(t)< 0,$ so $\left(-x^2_{k}+2 s_{k}^* |x_{k}|\right)>0.$

Similarly, it can be shown that $W_k(t)\leq 0$ for $k=1$ and $k=n.$ From the discussion above, we have $$\dot{V}(t)=\sum_{k=1}^n W_k \leq 0$$ for all $t\geq 0$ and the lemma holds.
\end{proof}

Define $$k_{\lambda}=\left\lceil \frac{\log\left(\frac{\log\left(\frac{1-\sqrt{\lambda}}{2\sqrt{\lambda}}\right)}{\log\lambda}+1\right)}{\log 2}\right\rceil,$$ where implies that $$\lambda\left(2s^*_{k_{\lambda}}+1\right)=\lambda\left(2\lambda^{2^{k_{\lambda}}-1}+1\right)\leq\sqrt{\lambda}.$$ Furthermore, define a sequence $\{w_k\}$ such that
\begin{eqnarray*}
w_0&=&0\\
w_1&=&1\\
w_k&=&\left(1+\frac{1}{k_{\lambda}}\sum_{j=1}^k\frac{1}{\left(\max\{1,4\lambda\}\right)^{j-1} }\right), \quad 2\leq k \leq k_{\lambda}\\
w_{k}&=&\left(1+\frac{k-k_{\lambda}}{n}\right)w_{k_{\lambda}},\quad k_{\lambda}<k\leq n,
\end{eqnarray*} and
$$\delta=\frac{1-\sqrt{\lambda}}{4n}.$$

\begin{lemma}
Under the dynamical system defined in (\ref{ds:truncated}), we have for any $t\geq 0,$ $$\sum_{k=1}^n w_k |x_k(t)|\leq \left(\sum_{k=1}^n w_k |x_k(0)|\right)e^{-\delta t}.$$ \label{lem:expstable}
\end{lemma}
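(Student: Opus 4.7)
I would prove this by treating $V(t)=\sum_{k=1}^n w_k |x_k(t)|$ as a weighted Lyapunov function and establishing the differential inequality $\dot V(t)\le -\delta V(t)$; the claimed bound then follows from Gr\"onwall. The setup mirrors the proof of Lemma~\ref{lem:exstable}: at points where some $x_k(t)=0$ I would use the upper right-hand derivative, so that $\frac{d|x_k|}{dt}=\sigma_k\dot x_k$ with $\sigma_k\in\{-1,+1\}$ chosen as in that proof. Writing $\dot V=\sum_k w_k\sigma_k\dot x_k$ and reindexing the three groups of terms in $\dot x_k$ by the variable they involve (using $w_0=0$ and $\tilde s_0=1$ so that $x_0\equiv 0$, and the fact that $\dot x_n$ has no $x_{n+1}$ term), I obtain $\dot V\le \sum_{k=1}^n c_k|x_k|$ where, for $1\le k\le n-1$,
\begin{equation*}
c_k \;\le\; (w_{k+1}-w_k)\,\lambda\,(1+2s_k^*) \;+\; (w_{k-1}-w_k),
\end{equation*}
and $c_n\le w_{n-1}-w_n$.

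The key bookkeeping step is to control the cross terms $w_{k+1}\sigma_{k+1}\lambda(x_k^2+2s_k^*x_k)$ that come from $\dot x_{k+1}$. I would split by the sign of $x_k$: when $x_k\ge 0$, the quantity $x_k^2+2s_k^*x_k\ge 0$ and is bounded by $|x_k|(|x_k|+2s_k^*)\le |x_k|(1+2s_k^*)$ since $|x_k|\le 1-s_k^*\le 1$; when $x_k<0$, Lemma~\ref{lem:xbound} gives $|x_k|\le s_k^*$, so $x_k^2+2s_k^*x_k=-|x_k|(2s_k^*-|x_k|)\le 0$, with magnitude at most $2s_k^*|x_k|\le|x_k|(1+2s_k^*)$. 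In both sign cases the combined quadratic-plus-linear contribution from $\dot x_k$ and $\dot x_{k+1}$ reduces to the single term $(w_{k+1}-w_k)\lambda|x_k|(1+2s_k^*)$, which is exactly what is used above.

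It then remains to check $c_k\le -\delta w_k$ in three regimes. For $k>k_\lambda$ the weights are arithmetic with common step $w_{k_\lambda}/n$, so $c_k\le -(w_{k_\lambda}/n)\bigl(1-\lambda(1+2s_k^*)\bigr)\le -(w_{k_\lambda}/n)(1-\sqrt{\lambda})$ by the defining inequality $\lambda(2s^*_{k_\lambda}+1)\le\sqrt\lambda$ together with the monotonicity of $s_k^*$; since $w_k\le 2w_{k_\lambda}$, the choice $\delta=(1-\sqrt\lambda)/(4n)$ gives $\delta w_k\le (1-\sqrt\lambda)w_{k_\lambda}/(2n)$, so the required inequality holds with room to spare (and the case $k=n$ is handled the same way with $c_n=-w_{k_\lambda}/n$). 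For $2\le k\le k_\lambda$, the weights are geometric with ratio $r=\max\{1,4\lambda\}$ so that $w_{k+1}-w_k=(w_k-w_{k-1})/r$, which is precisely tuned to absorb $\lambda(1+2s_k^*)\le\lambda(1+2\lambda)\le (1+r)/4\le r/2$ (after a short calculation), yielding $(w_{k-1}-w_k)+(w_{k+1}-w_k)\lambda(1+2s_k^*)\le -\tfrac{1}{2}(w_k-w_{k-1})$, which is easily shown to dominate $\delta w_k$ using $w_k\le 2$. The boundary case $k=1$ is handled separately, using $w_0=0$ so that $c_1\le -1+(w_2-w_1)\lambda(1+2\lambda)$, and verifying this is at most $-\delta$ since $w_2-w_1=(1+r^{-1})/k_\lambda$ is small.

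The main obstacle will be the transition case $k=k_\lambda$, where the weight sequence switches from the geometric to the arithmetic regime and the two increments on either side have different forms; verifying $c_{k_\lambda}\le -\delta w_{k_\lambda}$ requires noting that both $(w_{k_\lambda}-w_{k_\lambda-1})=(1/k_\lambda)r^{-(k_\lambda-1)}$ and $(w_{k_\lambda+1}-w_{k_\lambda})=w_{k_\lambda}/n$ are $O(1/n)$, and then using $\lambda(1+2s_{k_\lambda}^*)\le\sqrt\lambda$ on the second to obtain a net negative increment of the right order. Once this is confirmed, summing gives $\dot V\le -\delta V$ and the lemma follows.
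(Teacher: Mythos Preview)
Your approach is the paper's: the same weighted Lyapunov function $V(t)=\sum_k w_k|x_k(t)|$, the same reduction (via the sign bookkeeping of Lemma~\ref{lem:exstable}) to the per-coordinate inequality $(w_{k+1}-w_k)\,\lambda(|x_k|+2s_k^*)\le (w_k-w_{k-1})-\delta w_k$, and the same three-regime case split at $k_\lambda$.

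Two arithmetic details need repair. First, in the regime $k<k_\lambda$ your claimed bound $\lambda(1+2\lambda)\le r/2$ fails as soon as $\lambda>\tfrac12$; the paper instead uses the cruder estimate $\lambda(|x_k|+2s_k^*)\le 3\lambda$ and then $3\lambda\le r=\max\{1,4\lambda\}$, which is valid for every $\lambda<1$ and leaves exactly the slack needed to absorb the $\delta w_k$ correction once $n$ is large. Second, at the transition $k=k_\lambda$ the backward increment $w_{k_\lambda}-w_{k_\lambda-1}=1/(k_\lambda r^{\,k_\lambda-1})$ is a fixed positive constant depending only on $\lambda$, not $O(1/n)$ as you wrote; it is precisely this constant-order negative contribution that dominates the $O(1/n)$ forward increment $w_{k_\lambda+1}-w_{k_\lambda}=w_{k_\lambda}/n$ and makes $c_{k_\lambda}\le -\delta w_{k_\lambda}$ hold for $n$ sufficiently large, which is how the paper closes that case.
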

\begin{proof} The proof follows the idea  of Theorem 3.6 in \cite{Mit_96}. Define $V(t)=\sum_{k=1}^n w_k |x_k(t)|$   and  $\dot{V}(t)=\sum_{k=1}^n W_k(t)$ such that $W_k(t)$ includes all the terms involving $x_k(t).$ The lemma is proved by showing that \begin{equation} W_k(t)\leq -\delta w_k |x_k(t)|.\label{eq:wex}\end{equation} We consider the case where $x_k(t)>0,$ or $x_k(t)=0$ and $\dot{x}_k\geq 0.$ According to the proof of Lemma \ref{lem:exstable}, we have
\begin{eqnarray*}
W_k(t)&\leq &w_{k+1}\lambda\left(x^2_{k}+2 s_{k}^* |x_{k}| \right)-w_k\lambda \left(x^2_k+2s_k^*|x_k|\right)-w_k|x_k|+w_{k-1}|x_{k}|.
\end{eqnarray*}
So (\ref{eq:wex}) holds if $$w_{k+1}\lambda\left(x^2_{k}+2 s_{k}^* |x_{k}| \right)-w_k\lambda \left(x^2_k+2s_k^*|x_k|\right)-w_k|x_k|+w_{k-1}|x_{k}|\leq -\delta w_k |x_k|,$$ in other words, if
\begin{equation}
w_{k+1} - w_k \leq \frac{(1-\delta) w_k -w_{k-1}}{\lambda(|x_{k}|+2 s_{k}^*)}.\label{eq:ww}
\end{equation}
We now prove (\ref{eq:ww}) by considering the following three cases.

When $1\leq k\leq k_{\lambda}-1,$ we have
\begin{eqnarray*}
w_{k+1}-w_k&=&\frac{1}{k_{\lambda}\left(\max\{1,4\lambda\}\right)^{k}}\\
\frac{(1-\delta) w_k -w_{k-1}}{\lambda(|x_{k}|+2 s_{k}^*)}&\geq&\frac{ w_k -w_{k-1}-\delta w_k}{3\lambda}=\frac{\frac{1}{k_{\lambda}\left(\max\{1,4\lambda\}\right)^{k-1}}-\delta w_k}{3\lambda}.
\end{eqnarray*} So inequality (\ref{eq:ww}) holds if
\begin{eqnarray*}
3\lambda&\leq& \max\{1,4\lambda\}-\delta w_k k_{\lambda}\left(\max\{1,4\lambda\}\right)^{k},
\end{eqnarray*} which can be established by proving
\begin{eqnarray*}
\delta w_k k_{\lambda} \left(\max\{1,4\lambda\}\right)^{k}&\leq& \lambda.
\end{eqnarray*} Since $\delta=\frac{1-\sqrt{\lambda}}{4n}$ and $w_k\leq 2$ for $k\leq k_{\lambda}-1,$ the inequality above holds when $n$ is sufficiently large.

When $n\geq k\geq k_{\lambda}+1,$ according to the definition of $k_{\lambda},$ $\lambda(|x_k|+2s_k^*)\leq \sqrt{\lambda}.$ Therefore,  we have
\begin{eqnarray*}
w_{k+1}-w_k&=&\frac{w_{k_{\lambda}}}{n}\\
\frac{(1-\delta) w_k -w_{k-1}}{\lambda(|x_{k}|+2 s_{k}^*)}&\geq&\frac{w_k -w_{k-1}-\delta w_k}{\sqrt{\lambda}}=\frac{\frac{w_{k_{\lambda}}}{n}-\delta w_k}{\sqrt{\lambda}}.
\end{eqnarray*} So inequality (\ref{eq:ww}) holds if
\begin{eqnarray*}
\sqrt{\lambda}w_{k_{\lambda}}&\leq& w_{k_{\lambda}}-\delta w_k n,
\end{eqnarray*} in other words, if
\begin{eqnarray*}
w_k&\leq& \frac{(1-\sqrt{\lambda})w_{k_{\lambda}}}{\delta n}=4w_{k_{\lambda}}.
\end{eqnarray*} Since $w_k\leq 4$ and $w_{k_{\lambda}}>1,$ the inequality above holds.

When $k=k_{\lambda},$ according to the definition of $k_{\lambda},$ $\lambda(|x_k|+2s_k^*)\leq \sqrt{\lambda}.$ Therefore,  we have
\begin{eqnarray*}
w_{k+1}-w_k&=&\frac{w_{k_{\lambda}}}{n}\\
\frac{(1-\delta) w_k -w_{k-1}}{\lambda(|x_{k}|+2 s_{k}^*)}&\geq&\frac{w_k -w_{k-1}-\delta w_k}{\sqrt{\lambda}}=\frac{\frac{1}{k_{\lambda}\left(\max\{1,4\lambda\}\right)^{k_{\lambda}-1}}-\delta w_k}{\sqrt{\lambda}}.
\end{eqnarray*} So inequality (\ref{eq:ww}) holds if
\begin{eqnarray*}
\sqrt{\lambda}w_{k_{\lambda}}&\leq& \frac{n}{k_{\lambda}\left(\max\{1,4\lambda\}\right)^{k_{\lambda}-1}}-\delta w_k n,
\end{eqnarray*} in other words, if
\begin{eqnarray*}
w_k&\leq& \frac{4}{1-\sqrt{\lambda}}\left(\frac{n}{k_{\lambda}\left(\max\{1,4\lambda\}\right)^{k_{\lambda}-1}}-\sqrt{\lambda}w_{k_{\lambda}}\right).
\end{eqnarray*} Since $w_k\leq 4$ the inequality above holds when $n$ is sufficiently large.

From the discussion above, we conclude that
$$\dot{V}(t)\leq -\sum_{k=1}^n \delta w_k |x_k(t)|=-\delta V(t),$$ so the lemma holds.
\end{proof}

For the ease of notation, in the following analysis, we use ${\bf x}(t,0)={\bf x}(t, {\bf x})$ and ${\bf x}(t,\epsilon)={\bf x}(t, {\bf y}),$ where $\epsilon=1/M,$ which is also to be consistent with notation used in \cite{Kha_01} such that ${\bf x}(t,\epsilon)$ is the perturbed version of ${\bf x}(t,0).$   We consider the finite Taylor series for ${\bf x}(t, \epsilon):={\bf x}(t, {\bf y})$ in terms of $\epsilon:$
\begin{equation}
{\bf x}(t,\epsilon)={\bf x}^{(0)}(t) +\epsilon {\bf x}^{(1)}(t)+ {\bf e}(t),\label{eq:taylor}
\end{equation} and
\begin{equation}
{\bf x}(0,\epsilon)={\bf x}+\epsilon {\bf z},
\end{equation} where $${\bf x}^{(0)}(t)={\bf x}(t,0) \hbox{ and } {\bf x}^{(1)}(t)=\left.\frac{d {\bf x}}{d \epsilon}(t,\epsilon)\right|_{\epsilon=0},$$ and ${\bf z}=M\left({\bf y}-{\bf x}\right)$ (again, all three vectors are $n$-dimensional).
Substituting (\ref{eq:taylor}) into the dynamical system equation, we get
\begin{eqnarray}
\dot{\bf x}(t,\epsilon)&=&\dot{\bf x}^{(0)}(t) +\epsilon \dot{\bf x}^{(1)}(t)+ \dot{\bf e}(t)=f({\bf x}(t,\epsilon))\\
&=& h^{(0)}({\bf x}^{(0)}(t))+h^{(1)}({\bf x}^{(\leq 1)}(t))\epsilon+R_{\bf e}(t,\epsilon),
\end{eqnarray} where ${\bf x}^{(\leq 1)}=\left({\bf x}^{(0)}, {\bf x}^{(1)}\right).$ The zero-order term $h^{(0)}$ is given by
$$\dot{x}^{(0)}(t)=h^{(0)}\left({\bf x}^{(0)}(t)\right)=f\left({\bf x}^{(0)}(t)\right)\quad \hbox{with}\quad{\bf x}^{(0)}(0)={\bf x},$$ which is the nominal system without the perturbation on the initial condition. The first-order term is given by
\begin{eqnarray*}
h^{(1)}\left({\bf x}^{({\leq 1})}(t)\right)&=&\left.\frac{d}{d \epsilon}f({\bf x}(t,\epsilon))\right|_{\epsilon=0}\\
&=&\left. \frac{\partial f}{\partial x}({\bf x}(t,\epsilon))\frac{d {\bf x}}{d \epsilon}(t,\epsilon)\right|_{\epsilon=0}\\
&=&\frac{\partial f}{\partial x}({\bf x}^{(0)}(t)){\bf x}^{(1)}(t).
\end{eqnarray*}
Therefore, we have
\begin{equation}\dot{\bf x}^{(1)}(t)=\frac{\partial f}{\partial x}({\bf x}^{(0)}(t)){\bf x}^{(1)}(t)\quad \hbox{with}\quad{\bf x}^{(1)}(0)={\bf z}, \label{ds:1st-def}
\end{equation}
which implies that
\begin{align}
&\dot{x}_k^{(1)}=g_k({\bf x}^{(1)}_1)=\nonumber\\
&\left\{
                                  \begin{array}{ll}
 - 2\lambda \left(x^{(0)}_1+s^*_k\right)x^{(1)}_1-x^{(1)}_1+x^{(1)}_2, & k=1\\
                                    2\lambda \left(x^{(0)}_{k-1}+s^*_{k-1}\right)x^{(1)}_{k-1}- 2\lambda \left(x^{(0)}_{k}+s^*_k\right)x^{(1)}_{k}-x^{(1)}_{k}+x^{(1)}_{k+1}, & 2\leq k\leq n-1 \\
                                    2\lambda \left(x^{(0)}_{n-1}+s_{n-1}^*\right)x^{(1)}_{n-1}- 2\lambda \left(x^{(0)}_{n}+s_n^*\right)x^{(1)}_{n}-x^{(1)}_{n}, & k=n
                                  \end{array}
                                \right..\label{1st-sys}
\end{align}

\begin{lemma} Under the dynamical system defined by (\ref{1st-sys}),
\begin{eqnarray*}
\left|{\bf x}^{(1)}(t)\right|\leq \left|{\bf x}^{(1)}(0)\right|.
\end{eqnarray*}\label{lem:x1bound}
\end{lemma}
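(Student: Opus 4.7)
The plan is to mimic the Lyapunov argument from Lemma \ref{lem:exstable} with the unweighted $\ell_1$ Lyapunov function $V(t)=\sum_{k=1}^n |x_k^{(1)}(t)|$. The key structural observation is that (\ref{1st-sys}) is a \emph{linear} time-varying system in ${\bf x}^{(1)}$ whose coefficients $2\lambda\bigl(x_k^{(0)}(t)+s_k^*\bigr)=2\lambda\, s_k(t)\in[0,2\lambda]$ are nonnegative (by Lemma \ref{lem:bound}), and whose coupling has the same tridiagonal pattern as (\ref{ds:truncated}). Crucially, no weight sequence $\{w_k\}$ of the type used in Lemma \ref{lem:expstable} is required here, because one only needs marginal rather than strict dissipativity.

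First I would compute the upper right-hand derivative $\frac{d}{dt}|x_k^{(1)}(t)|$ using the same sign convention adopted in the proof of Lemma \ref{lem:exstable}: it equals $\operatorname{sgn}(x_k^{(1)})\,\dot x_k^{(1)}$ when $x_k^{(1)}\neq 0$ and the analogous right-derivative value when $x_k^{(1)}=0$. Substituting (\ref{1st-sys}) and bounding $\operatorname{sgn}(x_k^{(1)})\,x_{k\pm 1}^{(1)}\le |x_{k\pm 1}^{(1)}|$ yields, for every $k=1,\dots,n$,
\begin{equation*}
\frac{d|x_k^{(1)}|}{dt}\le 2\lambda\, s_{k-1}(t)\,|x_{k-1}^{(1)}|-2\lambda\, s_k(t)\,|x_k^{(1)}|-|x_k^{(1)}|+|x_{k+1}^{(1)}|,
\end{equation*}
with the boundary conventions $|x_0^{(1)}|=0$ (since $\tilde s_0\equiv 1$, so the perturbation vanishes) and $|x_{n+1}^{(1)}|=0$ (the forward coupling is absent in the $k=n$ equation).

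Second, I would sum this inequality over $k=1,\dots,n$. Two telescoping cancellations occur: the backward-coupling sum $\sum_{k=1}^n 2\lambda\,s_{k-1}|x_{k-1}^{(1)}|$ cancels the diagonal $\sum_{k=1}^{n-1} 2\lambda\,s_k|x_k^{(1)}|$, and the forward-coupling sum $\sum_{k=1}^{n-1}|x_{k+1}^{(1)}|$ cancels $\sum_{k=2}^n |x_k^{(1)}|$. The surviving contributions are both nonpositive, giving
\begin{equation*}
\dot V(t)\le -|x_1^{(1)}(t)|-2\lambda\, s_n(t)\,|x_n^{(1)}(t)|\le 0,
\end{equation*}
so $V(t)\le V(0)$, which is precisely $|{\bf x}^{(1)}(t)|\le|{\bf x}^{(1)}(0)|$.

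I do not anticipate a genuine obstacle; the only subtlety is the standard handling of nondifferentiable points $\{t:x_k^{(1)}(t)=0\}$, which is dispatched exactly as in Lemma \ref{lem:exstable} via the upper right-hand derivative. The only item worth checking carefully is that the two boundary equations (for $k=1$ and $k=n$), which each drop one coupling term, still produce nonpositive contributions to $\dot V(t)$; the computation above actually shows that these boundary indices are the \emph{only} source of dissipation, and $\dot V(t)$ becomes strictly negative whenever $|x_1^{(1)}|$ or $s_n(t)|x_n^{(1)}|$ is nonzero.
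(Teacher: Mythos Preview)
Your proposal is correct and is essentially the same argument as the paper's own proof: both use the unweighted $\ell_1$ Lyapunov function $V(t)=\sum_{k=1}^n|x_k^{(1)}(t)|$, bound $\frac{d}{dt}|x_k^{(1)}|$ via the sign convention of Lemma~\ref{lem:exstable}, and telescope to obtain $\dot V(t)\le -|x_1^{(1)}|-2\lambda\bigl(x_n^{(0)}+s_n^*\bigr)|x_n^{(1)}|\le 0$. The only cosmetic difference is that the paper cites Lemma~\ref{lem:xbound} (rather than Lemma~\ref{lem:bound}) for the nonnegativity $x_k^{(0)}(t)+s_k^*\ge 0$, which is the fact you actually need.
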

\begin{proof}
First recall that $x^{(0)}_{k}(t)+s^*_k\geq 0$ for any $t\geq 0$ and $k$ according to Lemma \ref{lem:xbound}. Define $$V(t)=\sum_{k=1}^n  \left|x^{(1)}(t)\right|.$$ Following the proof of Lemma \ref{lem:exstable}, we obtain that
\begin{align}
&\frac{d|{x}_k^{(1)}(t)|}{dt}\leq \nonumber\\
&\left\{
                                  \begin{array}{ll}
 - 2\lambda \left(x^{(0)}_1+s^*_k\right)\left|x^{(1)}_1\right|-\left|x^{(1)}_1\right|+\left|x^{(1)}_2\right|, & k=1\\
                                    2\lambda \left(x^{(0)}_{k-1}+s^*_{k-1}\right)\left|x^{(1)}_{k-1}\right|- 2\lambda \left(x^{(0)}_{k}+s^*_k\right)\left|x^{(1)}_{k}\right|-\left|x^{(1)}_{k}\right|+\left|x^{(1)}_{k+1}\right|, & 2\leq k\leq n-1 \\
                                    2\lambda \left(x^{(0)}_{n-1}+s_{n-1}^*\right)\left|x^{(1)}_{n-1}\right|- 2\lambda \left(x^{(0)}_{n}+s_n^*\right)\left|x^{(1)}_{n}\right|-\left|x^{(1)}_{n}\right|, & k=n
                                  \end{array}
                                \right..\label{1st-sys-b}
\end{align} Therefore
$$\dot{V}(t)\leq  -2\lambda \left(x^{(0)}_{n}+s_n^*\right)\left|x^{(1)}_{n}\right|-\left|x^{(1)}_1\right|\leq 0,$$ and the lemma holds.
\end{proof}

Define $\tilde{k}=\left\lceil\frac{\log\left(\frac{\log 8}{\log\frac{1}{\lambda}}+1\right)}{\log 2}\right\rceil,$ a sequence of $\tilde{w}_k$ such that
\begin{eqnarray*}
\tilde{w}_1&=&1\\
\tilde{w}_k&=&\left(1+\frac{1}{\tilde{k}}\sum_{j=1}^k\frac{1}{\left(\max\{1,5\lambda\}\right)^{j-1} }\right), \quad 2\leq k \leq \tilde{k}\\
\tilde{w}_{k}&=&\left(1+\frac{k-\tilde{k}}{n}\right)\tilde{w}_{\tilde{k}},\quad \tilde{k}<k\leq n.
\end{eqnarray*}
Furthermore, we define $$\tilde{\delta}=\frac{2-\lambda}{8n}$$ and
$$\tilde{t}= \frac{1}{\delta}\log\left(32n\right)\geq \frac{1}{\delta}\log\left(32|{\bf x}(0)|\right).$$

\begin{lemma}
There exists $\tilde{n}$ independent of $M$ such that for any $n\geq \tilde{n}$ and $t\geq \tilde{t},$ we have
\begin{eqnarray*}
\sum_{k=1}^n \tilde{w}_k\left|x^{(1)}_k(t)\right|\leq \sum_{k=1}^n \tilde{w}_k \left|x^{(1)}_k(\tilde{t})\right|e^{-\tilde{\delta} (t-\tilde{t})}.
\end{eqnarray*} \label{lem:x1}
\end{lemma}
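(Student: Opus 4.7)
The plan is to carry out a Lyapunov argument for the linearized system (\ref{1st-sys}) that mirrors the proof of Lemma \ref{lem:expstable} for the nominal system, the essential new input being that for $t \geq \tilde{t}$ the nominal trajectory ${\bf x}^{(0)}(t)$ is already small enough that the coefficients driving (\ref{1st-sys}) become contractive under the weighted norm. Set $V(t) = \sum_{k=1}^n \tilde{w}_k |x^{(1)}_k(t)|$, and, as in the proofs of Lemmas \ref{lem:exstable} and \ref{lem:expstable}, use the upper right derivative at the (measure-zero) instants where some $x^{(1)}_k(t) = 0$. Regrouping the right-hand side of (\ref{1st-sys-b}) by the index $k$ it multiplies (with the convention $\tilde{w}_0 := 0$; the $k=1$ and $k=n$ boundary equations contribute only extra nonpositive terms exactly as in Lemma \ref{lem:expstable}) yields
\begin{equation*}
\dot V(t) \;\leq\; \sum_{k=1}^n \Bigl[(\tilde{w}_{k+1} - \tilde{w}_k)\,2\lambda\bigl(x^{(0)}_k(t) + s^*_k\bigr) + \tilde{w}_{k-1} - \tilde{w}_k\Bigr]\,|x^{(1)}_k(t)|,
\end{equation*}
so $\dot V(t) \leq -\tilde{\delta}\, V(t)$ will follow from the per-index weight inequality
\begin{equation*}
\tilde{w}_{k+1} - \tilde{w}_k \;\leq\; \frac{(1-\tilde{\delta})\tilde{w}_k - \tilde{w}_{k-1}}{2\lambda\bigl(x^{(0)}_k(t) + s^*_k\bigr)}, \qquad t \geq \tilde{t}.
\end{equation*}

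Next I would invoke Lemma \ref{lem:expstable} to bound the denominator. Since $w_k \leq 4$ and $|{\bf x}(0)| \leq n$ (Lemma \ref{lem:xbound}), Lemma \ref{lem:expstable} gives $\sum_k w_k|x^{(0)}_k(t)| \leq 4n\, e^{-\delta t}$, which at $t = \tilde{t} = \frac{1}{\delta}\log(32n)$ is at most $1/8$; and since $w_k \geq 1$ for $k \geq 1$, this forces $|x^{(0)}_k(t)| \leq 1/8$ coordinate-wise for every $t \geq \tilde{t}$. The definition of $\tilde{k}$ is designed so that $s^*_{\tilde{k}} = \lambda^{2^{\tilde{k}} - 1} \leq 1/8$, hence for $k \geq \tilde{k}$ the denominator is bounded by $2\lambda(1/8 + 1/8) = \lambda/2$, while for $1 \leq k < \tilde{k}$ one uses $s^*_k \leq 1$ to get a denominator at most $2\lambda(1/8 + 1) = 9\lambda/4 \leq \max\{1, 5\lambda\}$. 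These are exactly the quantities the $\tilde{w}_k$'s were tuned to handle, with the constant $\max\{1, 5\lambda\}$ (rather than the $\max\{1, 4\lambda\}$ of Lemma \ref{lem:expstable}) absorbing the extra $1/8$ slack from the perturbation.

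With the denominator bounds in hand, the weight inequality is verified by the three-case analysis used in Lemma \ref{lem:expstable}. In the geometric regime $1 \leq k < \tilde{k}$, substituting $\tilde{w}_{k+1} - \tilde{w}_k = \tilde{k}^{-1}(\max\{1,5\lambda\})^{-k}$ and clearing denominators reduces the inequality to $9\lambda/4 + \tilde{\delta}\,\tilde{w}_k\,\tilde{k}\,(\max\{1,5\lambda\})^k \leq \max\{1,5\lambda\}$; the second term on the left is $O(n\tilde{\delta}) = O(1)$ and fits into the positive gap $\max\{1,5\lambda\} - 9\lambda/4 > 0$ once $n \geq \tilde{n}$. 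In the linear tail $\tilde{k} < k \leq n$, using $\tilde{w}_{k+1} - \tilde{w}_k = \tilde{w}_{\tilde{k}}/n$, the denominator bound $\lambda/2$, and $\tilde{\delta}\,n = (2-\lambda)/8$ collapses the inequality to $\tilde{w}_k \leq 4\tilde{w}_{\tilde{k}}$, which is automatic since $\tilde{w}_k \leq 2\tilde{w}_{\tilde{k}}$ by construction. At the matching index $k = \tilde{k}$ the left side $\tilde{w}_{\tilde{k}}/n$ is $O(1/n)$ while the right side is at least a positive $\lambda$-dependent constant $(\tilde{w}_{\tilde{k}} - \tilde{w}_{\tilde{k}-1})\cdot 2/\lambda$ minus an $O(1/n)$ correction, so the inequality holds for $n \geq \tilde{n}$. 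Once the per-index inequality is established on $[\tilde{t}, \infty)$, Grönwall's inequality yields $V(t) \leq V(\tilde{t})\, e^{-\tilde{\delta}(t - \tilde{t})}$.

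The main obstacle will be the bookkeeping at the matching index $k = \tilde{k}$, where the weight sequence transitions from geometric growth to linear growth and the two regimes must be reconciled; this works only because the denominator has been pushed down to $\lambda/2$ by the choice of $\tilde{k}$ (matching the bound used in the $k > \tilde{k}$ regime), and because the constant-in-$n$ jump $\tilde{w}_{\tilde{k}} - \tilde{w}_{\tilde{k} - 1}$ dominates the $O(1/n)$ slack — precisely the role of the "$n$ sufficiently large" condition embodied in $\tilde{n}$.
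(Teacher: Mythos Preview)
Your proposal is correct and follows essentially the same route as the paper: the Lyapunov function $V(t)=\sum_k \tilde w_k|x^{(1)}_k(t)|$, the observation that Lemma~\ref{lem:expstable} forces $|x^{(0)}_k(t)|\le 1/8$ for $t\ge\tilde t$, and the three-case verification of the weight inequality (small $k$, $k=\tilde k$, tail $k>\tilde k$) are exactly what the paper does. One small slip: in the regime $1\le k<\tilde k$ you write that the correction term is ``$O(n\tilde\delta)=O(1)$''; in fact $\tilde\delta\,\tilde w_k\,\tilde k\,(\max\{1,5\lambda\})^{k}=O(\tilde\delta)=O(1/n)$ since $\tilde k$ and $(\max\{1,5\lambda\})^{k}$ depend only on $\lambda$, which is precisely why taking $n\ge\tilde n$ closes the gap --- your conclusion is right, only the stated order is off.
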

\begin{proof}
Since $\tilde{k}=\left\lceil\frac{\log\left(\frac{\log 8}{\log\frac{1}{\lambda}}+1\right)}{\log 2}\right\rceil,$ so $s^*_k\leq s^*_{\tilde{k}}=\lambda^{2^{\tilde{k}}-1}\leq \frac{1}{8}$ for any $k\geq \tilde{k}.$ Now according to Lemma \ref{lem:expstable} and the fact that $1\leq w_k\leq 4,$ we have
$$\sum_{k=1}^n |x^{(0)}_k(t)| \leq \sum_{k=1}^n w_k |x^{(0)}_k(t)|\leq \left(\sum_{k=1}^n w_k |x_k(0)|\right)e^{-{\delta} t}\leq 4 \left(\sum_{k=1}^n |x_k(0)|\right)e^{-\delta t}.$$ Therefore, when $t\geq \tilde{t}\geq \frac{1}{\delta}\log\left(32|{\bf x}(0)|\right),$ $$|{\bf x}^{(0)}(t)|\leq \frac{1}{8}.$$

We further define the following Lyapunov function
\begin{eqnarray*}
{V}({\bf x}^{(1)})=\sum_{k=1}^n \tilde{w}_k \left|x^{(1)}_k\right|.
\end{eqnarray*}
Following the proof of Lemma \ref{lem:exstable} or the proof of Theorem 3.6 in \cite{Mit_96}, we obtain that
\begin{eqnarray*}
\dot{V}({\bf x}^{(1)}) &\leq&\sum_{k=1}^n -\left(2\lambda \tilde{w}_k\left(x^{(0)}_{k}+s^*_k\right)+\tilde{w}_k-2\lambda \tilde{w}_{k+1}\left(x^{(0)}_{k}+s^*_{k}\right)-\tilde{w}_{k-1}\right) \left|x^{(1)}_k\right|.
\end{eqnarray*}
So the lemma holds by proving
\begin{eqnarray*}
-\left(2\lambda \tilde{w}_k\left(x^{(0)}_{k}+s^*_k\right)+\tilde{w}_k-2\lambda \tilde{w}_{k+1}\left(x^{(0)}_{k}+s^*_{k}\right)-\tilde{w}_{k-1}\right)\leq -\tilde{\delta} \tilde{w}_k,
\end{eqnarray*} i.e., by proving
\begin{eqnarray}
\tilde{w}_{k+1}-\tilde{w}_k\leq \frac{\tilde{w}_k-\tilde{w}_{k-1}-\tilde{\delta} \tilde{w}_k}{2\lambda \left(x^{(0)}_{k}+s^*_k\right)}.
\label{eq:www}
\end{eqnarray}
We now prove (\ref{eq:www}) by considering the following three cases.

When $1\leq k\leq \tilde{k}-1,$ we have
\begin{eqnarray*}
\tilde{w}_{k+1}-\tilde{w}_k&=&\frac{1}{\tilde{k}\left(\max\{1,5\lambda\}\right)^{k}}\\
\frac{\tilde{w}_k-\tilde{w}_{k-1}-\tilde{\delta} \tilde{w}_k}{2\lambda \left(x^{(0)}_{k}+s^*_k\right)}&\geq&\frac{\frac{1}{\tilde{k}\left(\max\{1,5\lambda\}\right)^{k-1}}-\tilde{\delta} \tilde{w}_k}{4\lambda}.
\end{eqnarray*} So inequality (\ref{eq:ww}) holds if
\begin{eqnarray*}
4\lambda&\leq& \max\{1,5\lambda\}-\tilde{\delta} \tilde{w}_k \tilde{k}\left(\max\{1,5\lambda\}\right)^{k},
\end{eqnarray*} which can be established by proving
\begin{eqnarray*}
\tilde{\delta} \tilde{w}_k \tilde{k}\left(\max\{1,5\lambda\}\right)^{k}&\leq& \lambda.
\end{eqnarray*} Since $\tilde{\delta}=\frac{2-\lambda}{8n}$ and $\tilde{w}_k\leq 2$ for $k\leq \tilde{k}-1,$ the inequality above holds when $n$ is sufficiently large.

When $n\geq k\geq \tilde{k}+1,$ according to the definition of $\tilde{k},$ $s_k^*\leq \frac{1}{8}.$  Furthermore, given $t\geq \tilde{t},$ $|x^{(0)}_{k}|\leq \frac{1}{8}$ for any $k.$ Therefore,  we have
\begin{eqnarray*}
\tilde{w}_{k+1}-\tilde{w}_k&=&\frac{\tilde{w}_{\tilde{k}}}{n}\\
\frac{(1-\delta) \tilde{w}_k -\tilde{w}_{k-1}}{2\lambda(|x_{k}|+s_{k}^*)}&\geq&\frac{\tilde{w}_k -\tilde{w}_{k-1}-\delta \tilde{w}_k}{\frac{\lambda}{2}}=\frac{\frac{\tilde{w}_{\tilde{k}}}{n}-\delta \tilde{w}_k}{\frac{\lambda}{2}}.
\end{eqnarray*} So inequality (\ref{eq:www}) holds if
\begin{eqnarray*}
\frac{1}{2}{\lambda}\tilde{w}_{\tilde{k}}&\leq& \tilde{w}_{\tilde{k}}-\tilde{\delta} \tilde{w}_k n,
\end{eqnarray*} in other words, if
\begin{eqnarray*}
\tilde{w}_k&\leq& \frac{\left(1-\frac{\lambda}{2}\right)\tilde{w}_{\tilde{k}}}{\tilde{\delta} n}=\frac{\left(8-{4\lambda}\right)\tilde{w}_{\tilde{k}}}{2-\lambda}\leq 4\tilde{w}_{\tilde{k}}.
\end{eqnarray*} Since $\tilde{w}_k\leq 4$ and $\tilde{w}_{\tilde{k}}>1$ by the definitions, (\ref{eq:www}) holds.

When $k=\tilde{k},$ according to the definition of $\tilde{k}$ and $\tilde{t},$ we have $2\lambda(|x_k|+s_k^*)\leq \frac{\lambda}{2}$ for $t\geq \tilde{t}.$ Therefore,  we have
\begin{eqnarray*}
\tilde{w}_{\tilde{k}+1}-\tilde{w}_{\tilde{k}}&=&\frac{\tilde{w}_{\tilde{k}}}{n}\\
\frac{(1-\delta) \tilde{w}_{\tilde{k}} -\tilde{w}_{\tilde{k}-1}}{2\lambda(|x_{\tilde{k}}|+ s_{\tilde{k}}^*)}&\geq&\frac{\tilde{w}_{\tilde{k}} -\tilde{w}_{\tilde{k}-1}-\tilde{\delta} \tilde{w}_{\tilde{k}}}{\frac{\lambda}{2}}=\frac{\frac{1}{\tilde{k}\left(\max\{1,5\lambda\}\right)^{\tilde{k}-1}}-\tilde{\delta} \tilde{w}_{\tilde{k}}}{\frac{\lambda}{2}}.
\end{eqnarray*} So inequality (\ref{eq:ww}) holds if
\begin{eqnarray*}
\frac{\lambda}{2}\tilde{w}_{\tilde{k}}&\leq& \frac{n}{\tilde{k}\left(\max\{1,5\lambda\}\right)^{\tilde{k}-1}}-\tilde{\delta} \tilde{w}_k n,
\end{eqnarray*} in other words, if
\begin{eqnarray*}
\left(\frac{\lambda}{2}+\frac{2-\lambda}{8}\right)\tilde{w}_{\tilde{k}}&\leq& \frac{n}{\tilde{k}\left(\max\{1,5\lambda\}\right)^{\tilde{k}-1}}.
\end{eqnarray*} Since $\tilde{w}_{\tilde{k}}\leq 2$ by the definition,  the inequality above holds when $n$ is sufficiently large.

From the analysis above, we conclude when $t\geq \tilde{t},$
\begin{eqnarray*}
\dot{V}(t)\leq  -\tilde{\delta} V(t)
\end{eqnarray*} and
\begin{eqnarray*}
V(t)\leq V(\tilde{t})e^{-\tilde{\delta} (t-\tilde{t})}.
\end{eqnarray*}
\end{proof}

\begin{cor}
\begin{eqnarray*}
\left|{\bf x}^{(1)}(t)\right|\leq \left\{
                                    \begin{array}{ll}
                                      1, & 0\leq t\leq \tilde{t} \\
                                      \min\left\{1, 4 e^{-\tilde{\delta} (t-\tilde{t})}\right\}, & t\geq \tilde{t}
                                    \end{array}
                                  \right..
\end{eqnarray*}\label{cor:x1bound}
\end{cor}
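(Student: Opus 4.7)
The plan is straightforward: combine Lemma~\ref{lem:x1bound} (which gives a non-increasing $L^1$ bound valid for all $t \geq 0$) with Lemma~\ref{lem:x1} (which gives exponential decay in a weighted $L^1$ norm once $t \geq \tilde{t}$), after first pinning down the size of the initial condition ${\bf x}^{(1)}(0)={\bf z}=M({\bf y}-{\bf x})$. From the transition rates in (\ref{eq:tranrate}), every non-self transition of the power-of-two-choices chain changes exactly one coordinate by $\pm 1/M$, so $|{\bf y}-{\bf x}|=1/M$ in the $L^1$ norm used throughout the preceding lemmas, and therefore $|{\bf x}^{(1)}(0)|=|{\bf z}|=1$.

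For $0 \leq t \leq \tilde{t}$, Lemma~\ref{lem:x1bound} directly yields $|{\bf x}^{(1)}(t)| \leq |{\bf x}^{(1)}(0)| = 1$. The same bound continues to hold for $t \geq \tilde{t}$, which supplies one branch of the min. For the exponential-decay branch, I would invoke Lemma~\ref{lem:x1} together with two elementary bounds on the weights that follow directly from the recursive definition preceding that lemma: $\tilde{w}_k \geq 1$ for every $1 \leq k \leq n$ (the sequence starts at $\tilde{w}_1=1$ and is built up by nonnegative increments), and $\tilde{w}_k \leq 4$ for every such $k$ (for $k \leq \tilde{k}$ each term of the geometric-type sum is at most one, giving $\tilde{w}_k \leq 1+k/\tilde{k} \leq 2$; for $k > \tilde{k}$ the linear interpolation multiplies $\tilde{w}_{\tilde k}$ by a factor of at most $2$). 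Chaining these,
\begin{equation*}
\bigl|{\bf x}^{(1)}(t)\bigr| \;\leq\; \sum_{k=1}^n \tilde{w}_k \bigl|x^{(1)}_k(t)\bigr| \;\leq\; e^{-\tilde{\delta}(t-\tilde{t})} \sum_{k=1}^n \tilde{w}_k \bigl|x^{(1)}_k(\tilde{t})\bigr| \;\leq\; 4\,\bigl|{\bf x}^{(1)}(\tilde{t})\bigr|\, e^{-\tilde{\delta}(t-\tilde{t})} \;\leq\; 4\, e^{-\tilde{\delta}(t-\tilde{t})},
\end{equation*}
where the last step applies Lemma~\ref{lem:x1bound} once more to bound $|{\bf x}^{(1)}(\tilde{t})| \leq |{\bf x}^{(1)}(0)|=1$. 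Taking the minimum of this with the flat bound $1$ yields the stated corollary.

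There is no real obstacle here beyond the routine verification of the weight bounds $1 \leq \tilde{w}_k \leq 4$, which are immediate from the explicit definition of $\tilde{w}_k$. No further perturbation estimates are required, and the corollary drops out as a short consequence of the two preceding lemmas together with the identity $|{\bf z}|=1$ that is forced by the power-of-two-choices transition structure.
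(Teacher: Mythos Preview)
Your proposal is correct and follows essentially the same approach as the paper: invoke Lemma~\ref{lem:x1bound} for the flat bound, then use $1\le \tilde{w}_k\le 4$ together with Lemma~\ref{lem:x1} to obtain the exponential decay for $t\ge \tilde t$, and combine. The only minor discrepancy is that the paper notes $|{\bf z}|\in\{0,1\}$ rather than $|{\bf z}|=1$, since after truncation to the first $n$ coordinates a transition that moves only coordinate $k>n$ yields ${\bf z}=0$; this does not affect your argument because in that case ${\bf x}^{(1)}(t)\equiv 0$ and the bound is trivial.
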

\begin{proof}
Note that $\left|{\bf x}^{(1)}(0)\right|=|{\bf z}| \in\{1, 0\}$ under the power-of-two-choices. The case for $t\leq \tilde{t}$ holds according to Lemma \ref{lem:x1bound}. Since $1\leq \tilde{w}_k\leq 4$ according to its definition, we can further conclude the case when $t\geq \tilde{t}.$
\end{proof}

\begin{cor}
\begin{eqnarray*}
\sum_{k=1}^n \left|\frac{\partial}{y_j} x_k(t, {\bf y})\right|\leq \left\{
                                    \begin{array}{ll}
                                      1, & 0\leq t\leq \tilde{t} \\
                                      \min\left\{1, 4 e^{-\tilde{\delta} (t-\tilde{t})}\right\}, & t\geq \tilde{t}
                                    \end{array}
                                  \right..
\end{eqnarray*}\label{cor:1st-decay}
\end{cor}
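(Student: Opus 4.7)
The plan is to recognize Corollary \ref{cor:1st-decay} as an immediate consequence of Corollary \ref{cor:x1bound} via the standard identification of initial-condition partial derivatives with solutions of the variational equation. For fixed ${\bf y}$ and index $j$, the map $t\mapsto u_k(t):=\frac{\partial}{\partial y_j} x_k(t,{\bf y})$ is the $k$-th component of the sensitivity vector, which satisfies the linearized system (\ref{1st-sys}) with the reference trajectory taken to be ${\bf x}(t,{\bf y})$ and with initial condition ${\bf u}(0)={\bf 1}_j$.

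First I would make this identification rigorous by differentiating the ODE $\dot{\bf x}(t,{\bf y})=f({\bf x}(t,{\bf y}))$ with respect to $y_j$ and exchanging the order of differentiation in $t$ and $y_j$ (legitimate on any bounded time interval since $f$ is polynomial in ${\bf x}$ and the trajectory is bounded by Lemma \ref{lem:xbound}). This produces
$$\frac{d}{dt}\left(\frac{\partial {\bf x}}{\partial y_j}(t,{\bf y})\right)=\frac{\partial f}{\partial {\bf x}}({\bf x}(t,{\bf y}))\frac{\partial {\bf x}}{\partial y_j}(t,{\bf y}),$$
which is exactly (\ref{ds:1st-def}) with ${\bf x}^{(0)}(t)$ replaced by ${\bf x}(t,{\bf y})$, and the initial condition is $\frac{\partial}{\partial y_j}{\bf y}={\bf 1}_j$.

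Next I would verify that the proofs of Lemmas \ref{lem:x1bound} and \ref{lem:x1} go through unchanged with ${\bf x}(t,{\bf y})$ playing the role of the nominal trajectory ${\bf x}^{(0)}(t)$. Since ${\bf y}$ itself satisfies $1=\tilde{s}_0\geq \tilde{s}_1\geq\cdots\geq \tilde{s}_n$, Lemma \ref{lem:xbound} gives $x_k(t,{\bf y})+s_k^*\geq 0$ for all $t\geq 0$, which is the sole structural property used in the bound on $\dot V$ in Lemma \ref{lem:x1bound}. Likewise Lemma \ref{lem:expstable} applied to the base trajectory yields $\sum_k |x_k(t,{\bf y})|\leq 4n\,e^{-\delta t}$, so that for $t\geq \tilde{t}=\frac{1}{\delta}\log(32n)$ one has $|x_k(t,{\bf y})|\leq \frac{1}{8}$, which is exactly the input required by the three-case inductive argument in Lemma \ref{lem:x1}.

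Finally I would translate the conclusions back to the unweighted $\ell^1$ norm. Because $|{\bf u}(0)|=1$, Lemma \ref{lem:x1bound} yields $\sum_k|u_k(t)|\leq 1$ for $0\leq t\leq \tilde{t}$; for $t\geq \tilde{t}$, Lemma \ref{lem:x1} combined with $1\leq \tilde{w}_k\leq 4$ gives
$$\sum_{k=1}^n |u_k(t)|\leq \sum_{k=1}^n \tilde{w}_k|u_k(t)|\leq \sum_{k=1}^n \tilde{w}_k |u_k(\tilde{t})|\,e^{-\tilde{\delta}(t-\tilde{t})}\leq 4\,e^{-\tilde{\delta}(t-\tilde{t})},$$
and the universal bound of $1$ from Lemma \ref{lem:x1bound} furnishes the minimum. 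There is no real obstacle in this proof; the bulk of the analytical work has already been absorbed into the weighted-Lyapunov constructions of Lemmas \ref{lem:x1bound} and \ref{lem:x1}, and all that remains is to make the dependence of those bounds on the reference trajectory explicit and instantiate it at ${\bf x}(t,{\bf y})$.
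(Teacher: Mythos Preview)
Your proposal is correct and follows essentially the same approach as the paper: identify $\partial x_k(t,{\bf y})/\partial y_j$ with the solution ${\bf x}^{(1)}(t)$ of the variational system (\ref{1st-sys}) taken along the nominal trajectory ${\bf x}^{(0)}(t)={\bf x}(t,{\bf y})$ with initial datum ${\bf 1}_j$, and then invoke Corollary~\ref{cor:x1bound}. The paper's proof states exactly this identification in two lines and defers to the preceding corollary; your write-up simply makes explicit the verification (that Lemmas~\ref{lem:x1bound} and~\ref{lem:x1} apply with ${\bf x}(t,{\bf y})$ as the reference trajectory, and the passage between weighted and unweighted $\ell^1$ norms via $1\le\tilde{w}_k\le 4$) that the paper leaves implicit.
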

\begin{proof}
Notice that $$\frac{\partial}{y_j} x_k(t, {\bf y})=\lim_{\epsilon\rightarrow 0} \frac{x_k(t, {\bf y}+\epsilon {\bf 1}_{j})-x_k(t, {\bf y})}{\epsilon},$$ where   ${\bf 1}_{j}$ is an $n\times 1$ vector such that  ${\bf 1}^{(j)}_j=1$ and  ${\bf 1}^{(j)}_k=0$ for $k\not=j.$ Therefore, $$\frac{\partial}{z_j} x_k(t, {\bf y})={x}_k^{(1)}(t)$$ with ${\bf x}^{(1)}(0)={\bf 1}_{j}$ and ${\bf x}^{(0)}(t)={\bf x}(t, {\bf y}).$ The corollary follows from the corollary above.
\end{proof}

We next study ${\bf e}(t)={\bf x}(t,\epsilon)-{\bf x}^{(0)}(t)-\epsilon{\bf x}^{(1)}(t).$ According to its definition and \eqref{ds:1st-def}, we have
\begin{eqnarray*}
\dot{\bf e}(t)&=&f\left({\bf x}(t,\epsilon)\right)-f\left({\bf x}^{(0)}(t)\right)-\epsilon\frac{\partial f}{\partial x}({\bf x}^{(0)}(t)){\bf x}^{(1)}(t)\\
{\bf e}(0)&=&0.
\end{eqnarray*}

\begin{lemma}
Assume $n\log n=o(M).$ For any $0\leq t\leq \tilde{t},$ $$|{\bf e}(t)|\leq \frac{2\tilde{t}}{M^2}=O\left(\frac{n\log n}{M^2}\right).$$ \label{lem:etsmallt}
\end{lemma}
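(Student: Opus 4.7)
The plan is to work with the full perturbation ${\bf u}(t):={\bf x}(t,\epsilon)-{\bf x}^{(0)}(t)$, so that ${\bf e}(t)={\bf u}(t)-\epsilon{\bf x}^{(1)}(t)$ and ${\bf u}(0)=\epsilon{\bf z}$ with $|{\bf u}(0)|\le 1/M$ under the power-of-two-choices. The two key steps are (i) a uniform bound $|{\bf u}(t)|\le 1/M$ for all $t\ge 0$, and (ii) an identity of the form $\dot{\bf e}=A(t){\bf e}+r({\bf u})$ in which $A(t)$ is the linearization from (\ref{1st-sys}) and $r({\bf u})$ is purely quadratic. Variation of parameters then closes out the bound using Lemma \ref{lem:x1bound}.

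For Step (i), let $s_k(t)=x_k^{(0)}(t)+s_k^*\in[0,1]$ and $\tilde s_k(t)=x_k(t,\epsilon)+s_k^*\in[0,1]$ (both bounded by Lemma \ref{lem:bound}), so $u_k=\tilde s_k-s_k$. Subtracting two copies of (\ref{ds:truncated}) and using the algebraic identity $\tilde s_j^2-s_j^2=(\tilde s_j+s_j)u_j$, one obtains, for $2\le k\le n-1$,
\begin{equation*}
\dot u_k=\lambda(\tilde s_{k-1}+s_{k-1})u_{k-1}-\lambda(\tilde s_k+s_k)u_k-u_k+u_{k+1},
\end{equation*}
with one-sided versions at $k=1$ and $k=n$. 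Because every coefficient $\lambda(\tilde s_j+s_j)$ lies in $[0,2\lambda]$, this has exactly the structural form of the linearization (\ref{1st-sys}). The same $\ell_1$ Lyapunov calculation as in Lemma \ref{lem:x1bound} (with the right-hand derivative convention at $u_k=0$ used in Lemma \ref{lem:exstable}) therefore yields $\frac{d}{dt}\sum_k|u_k|\le -|u_1|-\lambda(\tilde s_n+s_n)|u_n|\le 0$, and hence $|{\bf u}(t)|\le|{\bf u}(0)|\le 1/M$ for all $t\ge 0$.

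For Step (ii), I would split $\lambda(\tilde s_j+s_j)=2\lambda s_j+\lambda u_j$ inside each $\dot u_k$; the $2\lambda s_j$ pieces reconstruct exactly the linearized RHS $[A(t){\bf u}]_k$ of (\ref{1st-sys}) (since $s_j=x_j^{(0)}+s_j^*$), and what remains is
\begin{equation*}
r_k({\bf u})=\lambda u_{k-1}^2-\lambda u_k^2
\end{equation*}
(with $u_0\equiv 0$ handling $k=1$, and the obvious truncation at $k=n$). Subtracting the identity $\dot{\bf x}^{(1)}=A(t){\bf x}^{(1)}$ scaled by $\epsilon$ from $\dot{\bf u}=A(t){\bf u}+r({\bf u})$ gives the clean forced linear system $\dot{\bf e}=A(t){\bf e}+r({\bf u})$ with ${\bf e}(0)={\bf 0}$. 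Using $\sum_k u_k^2\le(\sum_k|u_k|)^2=|{\bf u}|^2\le 1/M^2$, one gets $|r({\bf u}(t))|\le 2\lambda|{\bf u}(t)|^2\le 2\lambda/M^2$.

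Finally, variation of parameters gives ${\bf e}(t)=\int_0^t\Phi(t,s)\,r({\bf u}(s))\,ds$, where $\Phi(t,s)$ is the fundamental solution of $\dot{\bf v}=A(t){\bf v}$. Lemma \ref{lem:x1bound}, applied with arbitrary initial conditions, implies that $\Phi(t,s)$ is $\ell_1$-non-expansive, so
\begin{equation*}
|{\bf e}(t)|\le\int_0^t|r({\bf u}(s))|\,ds\le\frac{2\lambda t}{M^2}\le\frac{2\tilde t}{M^2}
\end{equation*}
for $0\le t\le\tilde t$, which is the claimed bound once one recalls $\tilde t=\frac{1}{\delta}\log(32n)=\Theta(n\log n)$. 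The main obstacle is really Step (i): one has to check that the $\ell_1$ cancellations from the proof of Lemma \ref{lem:x1bound} survive when the frozen coefficients $2\lambda s_k$ are replaced by the time-varying $\lambda(\tilde s_k+s_k)$. This works precisely because $\tilde s_k,s_k\in[0,1]$ keeps every coefficient in $[0,2\lambda]$, and the Lyapunov telescoping uses only this non-negativity and boundedness.
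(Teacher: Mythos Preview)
Your argument is correct and in fact somewhat cleaner than the paper's. The paper expands $\dot e_k$ directly in terms of ${\bf e}$, ${\bf x}^{(0)}$ and $\epsilon{\bf x}^{(1)}$, obtaining
\[
\dot e_k=g_k({\bf e})+\lambda\bigl(e_{k-1}^2+2\epsilon x^{(1)}_{k-1}e_{k-1}-e_k^2-2\epsilon x^{(1)}_k e_k\bigr)+\lambda\epsilon^2\bigl((x^{(1)}_{k-1})^2-(x^{(1)}_k)^2\bigr),
\]
and then runs an $\ell_1$ Lyapunov argument on $V=|{\bf e}|$. Because the forcing still contains $e_k^2$ and $\epsilon x^{(1)}_k e_k$, the paper closes the estimate by a bootstrap: it \emph{assumes} $|{\bf e}(t)|\le 2\tilde t/M^2$ on $[0,\tilde t]$, plugs this into the forcing, and checks $\dot V\le 2/M^2$; the hypothesis $n\log n=o(M)$ is used precisely to absorb the resulting $O(\tilde t/M^3)$ cross term.

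Your route sidesteps the bootstrap entirely. By first proving $|{\bf u}(t)|\le 1/M$ via the same $\ell_1$ telescoping as Lemma~\ref{lem:x1bound} (valid because the coefficients $\lambda(\tilde s_k+s_k)\in[0,2\lambda]$), you isolate a forcing $r_k({\bf u})=\lambda u_{k-1}^2-\lambda u_k^2$ that is independent of ${\bf e}$ and uniformly $\le 2\lambda/M^2$ in $\ell_1$. Variation of parameters together with the $\ell_1$-nonexpansiveness of $\Phi(t,s)$ then gives $|{\bf e}(t)|\le 2\lambda t/M^2$ directly. A pleasant by-product is that your proof of this particular lemma does not actually use $n\log n=o(M)$; the paper's bootstrap does. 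The only point worth making explicit is that Lemma~\ref{lem:x1bound} is stated at $s=0$, but its proof shows $\frac{d}{dt}|{\bf v}|\le 0$ for every solution of $\dot{\bf v}=A(t){\bf v}$, hence $\|\Phi(t,s)\|_{\ell_1\to\ell_1}\le 1$ for all $0\le s\le t$, which is exactly what you use.
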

\begin{proof}
We first have for $1<k<n,$
\begin{eqnarray*}
&&\dot{e}_k(t)\\
&=&f_k\left( {\bf x}^{(0)}(t)+\epsilon{\bf x}^{(1)}(t)+{\bf e}(t)\right)-f_k\left({\bf x}^{(0)}(t)\right)-\epsilon\sum_{j=1}^n \frac{\partial f_k}{\partial x_j}({\bf x}^{(0)}(t)){x}^{(1)}_j(t)\\
&=&\lambda\left(\left( {x}_{k-1}^{(0)}(t)+\epsilon{x}_{k-1}^{(1)}(t)+{e}_{k-1}(t)\right)^2+2 s_{k-1}^*\left( {x}_{k-1}^{(0)}(t)+\epsilon{x}_{k-1}^{(1)}(t)+{e}_{k-1}(t)\right) \right)\\
&&-\lambda\left(\left( {x}_{k}^{(0)}(t)+\epsilon{x}_{k}^{(1)}(t)+{e}_{k}(t)\right)^2+2 s_{k}^*\left( {x}_{k}^{(0)}(t)+\epsilon{x}_{k}^{(1)}(t)+{e}_{k}(t)\right) \right)\\
&&-\left({x}_{k}^{(0)}(t)+\epsilon{x}_{k}^{(1)}(t)+{e}_{k}(t)- {x}_{k+1}^{(0)}(t)-\epsilon{x}_{k+1}^{(1)}(t)-{e}_{k+1}(t)\right)\\
&&-\lambda\left(\left( {x}_{k-1}^{(0)}(t)\right)^2+2 s_{k-1}^* {x}_{k-1}^{(0)}(t) \right)+\lambda\left(\left( {x}_{k}^{(0)}(t)\right)^2+2 s_{k}^* {x}_{k}^{(0)}(t)\right)\\
&&+\left({x}_{k}^{(0)}(t)- {x}_{k+1}^{(0)}(t)\right) \\
&&-2\epsilon\lambda \left(x^{(0)}_{k-1}+s^*_{k-1}\right)x^{(1)}_{k-1}+ 2\epsilon\lambda \left(x^{(0)}_{k}+s^*_k\right)x^{(1)}_{k}+\epsilon x^{(1)}_{k}-\epsilon x^{(1)}_{k+1}\\
&=&\lambda\left(e_{k-1}^2+2\left(x_{k-1}^{(0)}+s_{k-1}^*+\epsilon x^{(1)}_{k-1}\right)e_{k-1}-e_{k}^2-2\left(x_{k}^{(0)}+s_{k}^*+\epsilon x^{(1)}_{k}\right)e_{k}\right)\\
&&-(e_{k}-e_{k+1})+\lambda \epsilon^2 \left(\left(x^{(1)}_{k-1}\right)^2 -\left(x^{(1)}_{k}\right)^2\right)\\
&=&2\lambda\left(x_{k-1}^{(0)}+s_{k-1}^*\right)e_{k-1}-2\lambda\left(x_{k}^{(0)}+s_{k}^*\right)e_{k}-(e_k-e_{k+1})\\
&&+\lambda\left(e_{k-1}^2+2\epsilon x^{(1)}_{k-1}e_{k-1}-e_{k}^2-2\epsilon x^{(1)}_{k}e_{k}\right)+\lambda \epsilon^2 \left(\left(x^{(1)}_{k-1}\right)^2 -\left(x^{(1)}_{k}\right)^2\right)\\
&=&g_k\left({\bf e}\right)+\lambda\left(e_{k-1}^2+2\epsilon x^{(1)}_{k-1}e_{k-1}-e_{k}^2-2\epsilon x^{(1)}_{k}e_{k}\right)+\lambda \epsilon^2 \left(\left(x^{(1)}_{k-1}\right)^2 -\left(x^{(1)}_{k}\right)^2\right),
\end{eqnarray*} where the last equality holds according to the definition of $g_k(\cdot)$ in (\ref{1st-sys}). The same equation holds for $k=1$ and $k=n.$

Define $V(t)=|{\bf e}(t)|.$ Now following the proof of Lemma \ref{lem:exstable}, we can obtain
\begin{eqnarray*}
\dot{V}(t)\leq \sum_{k=1}^n 2\lambda\left(e_{k}^2+2\epsilon \left|x^{(1)}_{k}\right||e_{k}|\right)+2\lambda \epsilon^2\left(x^{(1)}_{k}\right)^2.
\end{eqnarray*} Assume $|{\bf e}(t)|\leq \frac{2\tilde{t}}{M^2}$ for $t\leq \tilde{t},$ we have that for sufficiently large $M,$
\begin{eqnarray*}
\dot{V}(t)\leq \frac{10\lambda\tilde{t}}{M^3}+\frac{2\lambda}{M^2}\leq \frac{2}{M^2},
\end{eqnarray*} where the last inequality holds because $n\log n=o(M)$ and $\tilde{t}=\Theta(n\log n).$ The inequality above implies that $|{\bf e}(t)|\leq \frac{2t}{M^2}$ and the lemma holds.
\end{proof}

\begin{lemma}
For any $t\geq \tilde{t},$ we have
\begin{align*}
|{\bf e}(t)|&\leq 4|{\bf e}(\tilde{t})|\exp\left(-\delta'(t-\tilde{t})\right)+ 128\lambda \epsilon^2 \exp\left(-\tilde{\delta}(t+\tilde{t})\right) \frac{1}{\tilde{\delta}}\left(1-\exp\left(-\tilde{\delta}(t-\tilde{t})\right)\right)\\
&=O\left(\frac{n\log n}{M^2}\right).
\end{align*}\label{lem:etlarget}
\end{lemma}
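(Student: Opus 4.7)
\textbf{Proof proposal for Lemma \ref{lem:etlarget}.}
The plan is to re-use the weighted Lyapunov function from Lemma \ref{lem:x1} and treat the ODE for $\mathbf{e}(t)$ as a perturbation of the first-order system (\ref{1st-sys}). The derivation in Lemma \ref{lem:etsmallt} already produced, for every $k$ and every $t\ge 0$, the identity
$$\dot e_k = g_k(\mathbf e)+\lambda\bigl(e_{k-1}^2-e_k^2+2\epsilon x^{(1)}_{k-1}e_{k-1}-2\epsilon x^{(1)}_k e_k\bigr)+\lambda\epsilon^2\bigl((x^{(1)}_{k-1})^2-(x^{(1)}_k)^2\bigr),$$
where $g_k$ is precisely the linear operator of the first-order perturbation system. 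Hence one can read $\dot{\mathbf e}$ as the $\mathbf x^{(1)}$-dynamics driven by three explicit forcing terms of orders $|\mathbf e|^2$, $\epsilon\,|\mathbf x^{(1)}|\,|\mathbf e|$ and $\epsilon^2|\mathbf x^{(1)}|^2$ respectively.

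The plan is then to work with $V(t)=\sum_{k=1}^n \tilde w_k |e_k(t)|$ (the same weights $\tilde w_k\in[1,4]$ used in Lemma \ref{lem:x1}). Differentiating $V$ along the trajectory and handling the $g_k(\mathbf e)$ part exactly as in the proof of Lemma \ref{lem:x1} yields, for $t\ge\tilde t$, an inequality of the form
$$\dot V(t)\le -\tilde\delta\, V(t)+C_1 V(t)^2+C_2\epsilon\,|\mathbf x^{(1)}(t)|\,V(t)+C_3\epsilon^2|\mathbf x^{(1)}(t)|^2,$$
for absolute constants $C_1,C_2,C_3$ produced by the telescoping of the extra quadratic and cross terms (the factor $4$ coming from $\tilde w_k\le 4$ explains the constant $128$ in the statement). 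Substituting the decay bound $|\mathbf x^{(1)}(t)|\le 4\exp(-\tilde\delta(t-\tilde t))$ from Corollary \ref{cor:x1bound}, the right-hand side becomes $-\tilde\delta V+C_1V^2+O(\epsilon V e^{-\tilde\delta(t-\tilde t)})+O(\epsilon^2 e^{-2\tilde\delta(t-\tilde t)})$.

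From this point I would argue as follows. First, a short bootstrap: Lemma \ref{lem:etsmallt} gives $|\mathbf e(\tilde t)|=O(n\log n/M^2)$, and a standard continuity argument (together with $n\log n=o(M)$) shows that $V(t)$ remains small enough on $[\tilde t,\infty)$ that the $C_1V^2$ and $C_2\epsilon|\mathbf x^{(1)}|V$ terms are each dominated by $(\tilde\delta/2)\,V$. This reduces the differential inequality to the linear form
$$\dot V(t)\le -\tilde\delta\, V(t)+\tilde C\,\epsilon^2 e^{-2\tilde\delta(t-\tilde t)}.$$
Multiplying by the integrating factor $e^{\tilde\delta t}$ and integrating from $\tilde t$ to $t$ yields the two-term bound in the statement: the first term $4|\mathbf e(\tilde t)|\exp(-\tilde\delta(t-\tilde t))$ (after switching between $|\mathbf e|$ and $V$ using $1\le\tilde w_k\le 4$), and the second term produced by $\int_{\tilde t}^{t}e^{-\tilde\delta(t-s)}e^{-2\tilde\delta(s-\tilde t)}ds=\tilde\delta^{-1}e^{-\tilde\delta(t+\tilde t)+2\tilde\delta\tilde t^{-}}(1-e^{-\tilde\delta(t-\tilde t)})$. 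Finally, the order bound $O(n\log n/M^2)$ follows by plugging in $|\mathbf e(\tilde t)|=O(n\log n/M^2)$ for the first contribution and $\epsilon^2/\tilde\delta=O(n/M^2)$ for the second.

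The main obstacle is the bootstrapping step: one must a priori know that $V(t)$ stays small to drop the quadratic term $C_1V^2$, but the estimate one is trying to prove is precisely that $V$ is small. The cleanest way is the usual closed-set/continuity trick --- define $T=\sup\{t\ge\tilde t:V(s)\le 1/(2C_1/\tilde\delta)\ \forall s\in[\tilde t,t]\}$, run the linearized Gronwall estimate on $[\tilde t,T)$, observe that the resulting bound is strictly smaller than the threshold for sufficiently large $M$, and conclude $T=\infty$. Beyond this, the other technical nuisance is bookkeeping the constants that turn $\tilde w_k\le 4$ and $|\mathbf x^{(1)}|\le 4e^{-\tilde\delta(t-\tilde t)}$ into the explicit $128\lambda$ in front of $\epsilon^2$; everything else is routine.
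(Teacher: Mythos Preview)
Your proposal is correct and follows essentially the same route as the paper: the same weighted Lyapunov function $V(t)=\sum_k\tilde w_k|e_k(t)|$, the same decomposition of $\dot e_k$ into the linear part $g_k(\mathbf e)$ plus the quadratic/cross/forcing terms, the same reduction to a linear differential inequality $\dot V\le -\tilde\delta V+O(\epsilon^2|\mathbf x^{(1)}|^2)$, and the same Gronwall/comparison integration using the decay of $|\mathbf x^{(1)}|$ from Corollary~\ref{cor:x1bound}. The only difference is presentational: the paper absorbs the $e_k^2$ and $\epsilon x_k^{(1)}e_k$ terms directly into the coefficient $2\lambda(x_k^{(0)}+s_k^*+\tfrac{e_k}{2}+\epsilon x_k^{(1)})$ and re-runs the Lemma~\ref{lem:x1} computation with this perturbed coefficient, asserting somewhat loosely that ``both can be made arbitrarily small,'' whereas you separate these terms out and make the needed bootstrap/continuity argument explicit---which is arguably the more honest treatment of exactly the circularity you flag.
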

\begin{proof}
We now consider $t\geq \tilde{t}.$ Define Lyapunov function $$V({\bf e}(t))=\sum_{k=1}^n \tilde{w}_k |e_k(t)|$$ for $\tilde{w}_k$ defined previously. We first have
\begin{eqnarray*}
&&\dot{e}_k(t)\\
&=& f_k\left({\bf x}(t,\epsilon)\right)-f_k\left({\bf x}^{(0)}(t)\right)-\epsilon\frac{\partial f_k}{\partial x}({\bf x}^{(0)}(t)){\bf x}^{(1)}(t)\\
&=& f_k\left( {\bf x}^{(0)}(t)+\epsilon{\bf x}^{(1)}(t)+{\bf e}(t)\right)-f_k\left({\bf x}^{(0)}(t)\right)-\epsilon\frac{\partial f_k}{\partial x}({\bf x}^{(0)}(t)){\bf x}^{(1)}(t)\\
&=&\lambda\left(2\left(x_{k-1}^{(0)}+s_{k-1}^*+\frac{e_{k-1}}{2}+\epsilon x^{(1)}_{k-1}\right)e_{k-1}-2\left(x_{k}^{(0)}+s_{k}^*+\frac{e_k}{2}+\epsilon x^{(1)}_{k}\right)e_{k}\right)\\
&&-(e_{k}-e_{k+1}) +\left(\lambda \epsilon^2 \left(x^{(1)}_{k-1}\right)^2 -\lambda \epsilon^2 \left(x^{(1)}_{k}\right)^2\right).
\end{eqnarray*}

Define $$\dot{V}(t)=\sum_{k=1}^n W_k(t)+W(t),$$ where $W_k(t)$ includes all the terms involving $e_k(t)$ and $W(t)$ includes all the remaining terms. Note that $|e_k(t)|=O(n\log n/M^2)$ according to the previous lemma and $\epsilon |x_k^{(1)}|=O(1/M),$ both of which can be made arbitrarily small by choosing sufficiently large $M.$ Therefore, following the analysis of Lemma \ref{lem:x1}, we have
$$\sum_{k=1}^n W_k(t)\leq -\tilde{\delta} V(t),$$ which implies that
\begin{eqnarray*}
\dot{V}({\bf e}(t))\leq -\tilde{\delta} {V}({\bf e}(t))+8\lambda \epsilon^2 \sum_{k=1}^n \left(x^{(1)}_{k}\right)^2.
\end{eqnarray*}

Define $$A(t)=\sum_{k=1}^n \left(x^{(1)}_{k}\right)^2.$$
By the comparison lemma in  \cite{Kha_01}, we have
\begin{eqnarray}
V(t)&\leq&\phi(t-\tilde{t},0)V(\tilde{t})+8\lambda \epsilon^2 \int_{0}^{t-\tilde{t}} \phi(t-\tilde{t},\tau)A(\tau+\tilde{t})\,d\tau\label{eq:w}
\end{eqnarray} where the transition function $\phi(t,\tau)$ is
$$\phi(t,0)=\exp\left(-\delta't\right).$$

According to Corollary \ref{cor:x1bound}, we have
$$A(t)=\sum_{k=1}^n \left(x^{(1)}_{k}\right)^2\leq  \left(\sum_{k=1}^n \left|x^{(1)}_{k}\right|\right)^2\leq  16 e^{-2\tilde{\delta} t}.$$
Substituting the bounds on $\phi(t,\tau)$ and $A(\tau),$ we obtain
\begin{eqnarray*}
V(t)&\leq&  V(\tilde{t}) \exp\left(-\tilde{\delta}(t-\tilde{t})\right)+128\lambda \epsilon^2  \int_0^{t-\tilde{t}} \exp\left(-\tilde{\delta}(t-\tilde{t}-\tau)-2\tilde{\delta}(\tau+\tilde{t})\right)\,d\tau\\
&=& V(\tilde{t}) \exp\left(-\tilde{\delta}(t-\tilde{t})\right)+128\lambda \epsilon^2 \exp\left(-\tilde{\delta}(t+\tilde{t})\right) \int_0^{t-\tilde{t}} \exp\left(-\tilde{\delta}\tau\right)\,d\tau\\
&=& V(\tilde{t}) \exp\left(-\tilde{\delta}(t-\tilde{t})\right)+128\lambda \epsilon^2 \exp\left(-\tilde{\delta}(t+\tilde{t})\right) \frac{1}{\tilde{\delta}}\left(1-\exp\left(-\tilde{\delta}(t-\tilde{t})\right)\right).
\end{eqnarray*} The lemma holds due to the definition of $V(t)$ and the fact that $0\leq \tilde{w}_k\leq 4$ for any $k.$
\end{proof}

\begin{theorem}For sufficiently large $M,$ we have
\begin{eqnarray*}
\int_0^\infty |{\bf e}(t)|\,dt=O\left(\frac{\left(n\log n\right)^2}{M^2}\right).
\end{eqnarray*}\label{thm:errorbound}
\end{theorem}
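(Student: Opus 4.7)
The plan is to split the integral at $t = \tilde{t}$ and bound each piece separately, using Lemma \ref{lem:etsmallt} on $[0,\tilde{t}]$ and Lemma \ref{lem:etlarget} on $[\tilde{t},\infty)$. Recall that $\tilde{t} = \frac{1}{\delta}\log(32n) = \Theta(n\log n)$, since $\delta = \frac{1-\sqrt{\lambda}}{4n}$, and that $\tilde{\delta} = \frac{2-\lambda}{8n} = \Theta(1/n)$ and $\epsilon = 1/M$. Throughout, $\lambda$ is a fixed constant in $(0,1)$, so $\lambda, 1/\tilde{\delta} \cdot n$ and similar quantities are $\Theta(1)$.

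For the short-time piece, Lemma \ref{lem:etsmallt} gives $|{\bf e}(t)| \le 2\tilde{t}/M^2$ uniformly on $[0,\tilde{t}]$, so
\begin{equation*}
\int_0^{\tilde{t}} |{\bf e}(t)|\,dt \;\le\; \tilde{t}\cdot \frac{2\tilde{t}}{M^2} \;=\; \frac{2\tilde{t}^2}{M^2} \;=\; O\!\left(\frac{(n\log n)^2}{M^2}\right).
\end{equation*}

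For the long-time piece, Lemma \ref{lem:etlarget} supplies a pointwise bound that splits into two terms. The first integrates to
\begin{equation*}
\int_{\tilde{t}}^\infty 4|{\bf e}(\tilde{t})|\exp\!\bigl(-\tilde{\delta}(t-\tilde{t})\bigr)\,dt \;=\; \frac{4|{\bf e}(\tilde{t})|}{\tilde{\delta}} \;=\; O\!\left(\frac{n \log n}{M^2}\cdot n\right) \;=\; O\!\left(\frac{n^2\log n}{M^2}\right),
\end{equation*}
using $|{\bf e}(\tilde{t})| = O(n\log n/M^2)$ from Lemma \ref{lem:etsmallt} and $1/\tilde{\delta} = \Theta(n)$. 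The second term is bounded by dropping the factor $(1-\exp(-\tilde{\delta}(t-\tilde{t})))\le 1$ and integrating, yielding
\begin{equation*}
\int_{\tilde{t}}^\infty \frac{128\lambda \epsilon^2}{\tilde{\delta}}\exp\!\bigl(-\tilde{\delta}(t+\tilde{t})\bigr)\,dt \;\le\; \frac{128\lambda}{\tilde{\delta}^2 M^2}\exp(-2\tilde{\delta}\tilde{t}) \;=\; O\!\left(\frac{n^2}{M^2}\right).
\end{equation*}

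Summing the three contributions gives $\int_0^\infty |{\bf e}(t)|\,dt = O((n\log n)^2/M^2)$, since $n^2\log n$ and $n^2$ are both absorbed by $n^2(\log n)^2$ for $n$ large. There is no real obstacle here: the two previous lemmas do essentially all of the work, and this theorem is just a bookkeeping step that combines the pre-$\tilde{t}$ quadratic-in-$\tilde{t}$ estimate with the post-$\tilde{t}$ exponential decay controlled by $1/\tilde{\delta} = \Theta(n)$. The only thing worth being careful about is that the chosen $\tilde{t} = \Theta(n\log n)$ is exactly the crossover at which the two bounds match in order, so the split is optimal up to constants.
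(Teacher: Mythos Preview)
Your proof is correct and follows essentially the same approach as the paper: split the integral at $\tilde{t}$, bound the short-time piece by $2\tilde{t}^2/M^2$ via Lemma~\ref{lem:etsmallt}, and integrate the exponential bound of Lemma~\ref{lem:etlarget} on $[\tilde{t},\infty)$ to obtain the $4|{\bf e}(\tilde{t})|/\tilde{\delta}$ and $128\lambda\epsilon^2/\tilde{\delta}^2$ contributions. Your bookkeeping of the orders ($\tilde{t}=\Theta(n\log n)$, $1/\tilde{\delta}=\Theta(n)$) is slightly more explicit than the paper's, but the argument is the same.
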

\begin{proof}Combining the previous two lemmas, we obtain
\begin{eqnarray*}
\int_0^\infty |{\bf e}(t)|\,dt&=&\int_0^{\tilde{t}} |{\bf e}(t)|\,dt+\int_{\tilde{t}}^{\infty} |{\bf e}(t)|\,dt\\
&\leq& \frac{2\tilde{t}^2}{M^2} +\frac{4|{\bf e}(\tilde{t})|}{\tilde{\delta}}+\frac{128\lambda}{\tilde{\delta}^2}\epsilon^2\\
&\leq& \frac{2\tilde{t}^2}{M^2} +\frac{8\tilde{t}}{\tilde{\delta}M^2}+\frac{128\lambda}{\tilde{\delta}^2}\epsilon^2.
\end{eqnarray*}  Since $\tilde{t}=\Theta(n\log n)$ and $\tilde{\delta}=\Theta(n),$ the theorem holds.
\end{proof}

\section{Conclusions}
This paper proved that the stationary distribution of the power-of-two-choices converges in mean-square to its mean-field limit with rate $O\left(\frac{(\log M)^3(\log\log M)^2}{M}\right).$ The proof was based on Stein's method and the perturbation theory. The proof extended the result in \cite{Yin_16} to infinite-dimensional systems. Besides quantifying the rate of convergence for the power-of-two-choices, the approach based on truncated mean-field models has the potential to be applied to understand the rate of convergence other infinite-dimensional CTMCs to their mean-field limits.

\section*{Acknowledgement} The author thanks Prof. Jim Dai, Anton Braverman, and Dheeraj Narasimha for very helpful discussions. This work was supported in part by the NSF under Grant ECCS-1255425.

\bibliographystyle{IEEEtran}
\bibliography{U:/bib/inlab-refs}
\end{document}